\newtheorem{theorem}{Theorem}
\newtheorem{lemma}{Lemma}
\newtheorem{remark}{Remark}
\def\H{{\bf H}}
\def\y{{\bf y}}
\def\0{{\bf 0}}
\def\1{{\bf 1}}
\def\vect{\mathsf{vec}}
\begin{document}

\title{Beyond MMSE: Rank-1 Subspace Channel Estimator for Massive MIMO Systems}

\author{Bin~Li,
        Ziping~Wei,
        Shaoshi~Yang,~\emph{Senior Member, IEEE},
        Yang~Zhang,
        Jun~Zhang,~\emph{Senior Member, IEEE},
        Chenglin~Zhao,
				Sheng Chen,~\emph{Life Fellow, IEEE}

\thanks{This work was supported in part by the Beijing Municipal Natural Science Foundation under Grant L234025 and Grant Z220004, in part by the National Natural Science Foundation of China under Grant 62071247, in part by the Beijing Municipal Science \& Technology Commission under Grant Z221100007722036,  and in part by the Fundamental Research Funds for the Central Universities under Grant 2023ZCJH02. \textit{(Corresponding author: Shaoshi Yang, Chenglin Zhao.)} } %
\thanks{B.~Li, Z.~Wei, S. Yang and C.~Zhao are with School of Information and Communication Engineering, Beijing University of Posts and Telecommunications, Beijing, 100876, China (E-mails: Binli@bupt.edu.cn, weizp@bupt.edu.cn, shaoshi.yang@bupt.edu.cn, clzhao@bupt.edu.cn).}%
\thanks{Y.~Zhang is with China Satellite Network Group Co., Ltd, Beijing, China (E-mail: zhangyang@bupt.cn).}%
\thanks{J.~Zhang is with Jiangsu Key Laboratory of Wireless Communications, Nanjing University of Posts and Telecommunications, Nanjing 210003, China (E-mail: zhangjun@njupt.edu.cn).} %
\thanks{S.~Chen is with School of Electronics and Computer Science, University of Southampton, Southampton SO17 1BJ, U.K. (E-mail: sqc@ecs.soton.ac.uk).} %
\vspace*{-5mm}
}

\maketitle

\begin{abstract}
To glean the benefits offered by massive multi-input multi-output (MIMO) systems, channel state information must be accurately acquired.
Despite the high accuracy, the computational complexity of classical linear minimum mean squared error (MMSE) estimator becomes prohibitively high in the context of massive MIMO, while the other low-complexity methods degrade the estimation accuracy seriously.
In this paper, we develop a novel rank-1 subspace channel estimator {\color{black}to approximate the maximum likelihood (ML) estimator}, which outperforms the linear MMSE estimator, but incurs a surprisingly low computational complexity.
Our method first acquires the highly accurate angle-of-arrival (AoA) information via a constructed space-embedding matrix and the rank-1 subspace method.
Then, it adopts the \emph{post-reception} beamforming to acquire the unbiased estimate of channel gains.
Furthermore, a fast method is designed to implement our new estimator.
Theoretical analysis shows that the extra gain achieved by our method over the linear MMSE estimator grows according to the rule of $\textsf{O}(\log_{10} M)$, while its computational complexity is \emph{linearly} scalable to the number of antennas $M$.
Numerical simulations also validate the theoretical results.
Our new method substantially extends the accuracy-complexity region and constitutes a promising channel estimation solution to the emerging massive MIMO communications.
\end{abstract}

\begin{IEEEkeywords}
Massive MIMO, channel estimation, MMSE estimator, rank-1 subspace, post-reception beamforming, Cramer-Rao lower bound, low complexity
\end{IEEEkeywords}

\IEEEpeerreviewmaketitle

\section{Introduction}\label{sec:intro}
Massive multi-input multi-output (MIMO), as well as its evolution,  has been recognized as a key technology in 5G and 6G communications \cite{marzetta2010noncooperative,ngo2013energy,hoydis2013massive,larsson2014massive,boccardi2014five,2015Fifty,extreme_massive_MIMO_6G,Wang_channel_model_6G}. It is capable of substantially improving both the spectrum and energy efficiency \cite{ngo2013energy}, thus enabling a wide variety of emerging applications \cite{larsson2014massive,2016_mMIMO_HSR,2016_massive_MIMO_video}. However, the vast potential benefits of massive MIMO can be attained only if the channel state information (CSI) is accurately acquired.
{\color{black}Given the prohibitive complexity of the maximum likelihood (ML) estimator,} the task of accurate CSI acquisition is usually accomplished by another classical linear minimum mean squared error (MMSE) channel estimator \cite{hoydis2013massive, shariati2014robust, Hu_ESPRIT_mMIMO, LV201630, Zhou_Angle_Estimation, semiblind_mMIMO_CE, FDD_mMIMO_CE}. The linear MMSE estimator exploits the covariance matrix of wireless channels and is capable of attaining high accuracy. Unfortunately, the {\it a priori} statistical covariance information tends to be unavailable, particularly in realistic dynamic environments. Moreover, the linear MMSE estimator may also incur an excessively high computational complexity and a significant processing delay, and hence becomes impractical to low-cost and low-power devices. For these reasons, the linear least-square (LS) channel estimator \cite{marzetta2010noncooperative} has been widely applied. However, the LS estimator inevitably degrades the CSI accuracy and thus the channel capacity, despite its benefit of low-complexity implementation \cite{shariati2014low,Li2020Randomized}.

Recently, a variety of other channel estimators have been proposed for massive MIMO systems, in order to achieve more balanced tradeoff between estimation accuracy and computational complexity. These estimators can be divided into three types. The first type of channel estimators exploit the low-rank property \cite{Xie2017An,shen2015joint,eliasi2017low,Li2020Randomized}, as massive MIMO channels are in general spatially correlated \cite{Xie2017An}. By approximately computing the inverse of a large matrix in the linear MMSE estimator, a weighted polynomial expansion channel (WPEACH) estimator was designed \cite{moshavi1996multistage,shariati2014low}, which results in a substantial reduction in computational complexity. However, in time-varying environments, the updating of the weight coefficients is highly complex \cite{Li2020Randomized}. The authors of \cite{shen2015joint} proposed a singular value projection (SVP) method, which has lower mean squared error (MSE) than the LS estimator and lower complexity than the linear MMSE estimator. The second type of estimators resort to the sparsity property of massive MIMO channels operating in millimeter-wave (mmWave) frequencies \cite{2010Compressed,alkhateeb2014channel,alkhateeby2015compressed,gao2016channel,rodriguezfernandez2018frequency,huang2018asymptotically,huang2019iterative}. For example, an iterative LS estimator with sparse message passing (LSE-SMP) was proposed to acquire CSI \cite{huang2018asymptotically,huang2019iterative}, and it attains the minimum variance unbiased estimate.

The third type of channel estimators first exploit the array signal processing techniques \cite{krim1996two, schmidt1986multiple, Li2019Fast} that are widely used in radar systems, such as multiple signal classification (MUSIC) and fast Fourier transform (FFT), to estimate the angle of arrival (AoA), and then invoke the acquired accurate AoA information to assist the fading gain estimation.
With the accurate information of AoAs, it is expected that the signal-to-noise ratio (SNR) in fading gain estimation will be improved \cite{guo2017millimeter,fan2017angle}.
{\color{black}
When the MUSIC algorithm is applied, the high resolution AoA estimate can be attained,
yet its complexity is formidable to massive MIMO systems.
Besides, the subspace computation is infeasible for uniform linear array (ULA), as the covariance matrix becomes unavailable in the multi-user case.
For this reason, FFT has been widely applied. As a result, this type of channel estimators are capable of  effectively reducing the complexity.
However, due to its leakage effect, the angular resolution is seriously limited and the estimated AoAs  even become \emph{biased}.
Therefore, the existing angular-domain estimators may potentially suffer from either unaffordable complexity or  degraded accuracy.
}

The designs of the state-of-the-art channel estimators for massive MIMO are to strike an attractive compromise between the two conflicting objectives, namely, the computational complexity and the estimation accuracy. Most existing methods can reduce the computational complexity to some extent \cite{shariati2014low,shen2015joint,fan2017angle}, compared with the linear MMSE estimator; but none of them is capable of outperforming the linear MMSE estimator. Despite its great significance in both theoretical analysis and practical applications, designing a channel estimator whose achievable performance is superior to that of the linear MMSE estimator remains an important open problem, especially when the low computation complexity is also required.

Against the above backdrop, in this work we propose a novel rank-1 subspace channel estimator for massive MIMO.
To the best of our knowledge, it is the first time that an approximate ML estimator is designed, which can outperform the classical linear MMSE estimator, yet requiring a dramatically reduced computational complexity.
To attain this, it firstly acquires the AoA information by leveraging a space-embedding Hankel matrix and exploiting the rank-1 subspace, which attains the near-optimal AoA estimates and breaks the resolution limit of the FFT method.
Second, a \emph{post-reception} beamforming scheme is suggested to estimate fading gains, relying on the ML criterion.
More importantly, we provide the theoretical result on the lower-bound error of this two-stage estimator, showing that the achieved gain over the linear MMSE estimator increases obeying the rule of $\textsf{O}(\log_{10} M)$, where $M$ is the number of antennas at the base station (BS).
In order to reduce the complexity, we further integrate a fast method, by using the low rank property of channel matrix of massive MIMO communications.
It thus achieves the highest-ever complexity reduction, which is only on the order of $\textsf{O}\big(K P^2 M\big)$, compared with $\textsf{O}\big(M^3\big)$ for the MMSE estimator, whereby $K < M$ is the number of users and $P\ll M$ is the number of paths.

Note that, although the single-snapshot MUSIC was studied in MIMO radar systems \cite{liao2016music,fortunati2014single,maisto2022single}, this is the first time it is introduced to massive MIMO communications.
By incorporating a post-reception beamforming scheme and the ML-based estimation of fading gains, the single-snapshot MUSIC approximates the ML estimator.
Moreover, we maximally simplify the computational decomposition of large matrix, thus making complex subspace methods applicable to massive MIMO systems.
As such, our method overcomes two challenges in classical CSI estimators, by significantly improving the accuracy and reducing the complexity.
The main contributions of this work are summarized as follows.

\begin{enumerate}
\item We develop a two-stage channel estimator for massive MIMO systems by leveraging the rank-1 subspace method.
We first obtain the estimation of unknown AoAs, relying on a constructed Hankel matrix and an estimated pseudo-spectrum.
Thus, the high-resolution AoAs are acquired, resulting in the estimation error on the order of $\textsf{O}\big(1/M^{1+\varepsilon}\big)$ for $0<\varepsilon<1$.
Moreover, we adopt the post-reception beamforming to estimate unknown channel gains, which effectively improves the SNR and, unlike the classical FFT method, attains the \emph{unbiased} channel gain estimation.

\item We prove that the Cram$\acute{\text{e}}$r-Rao lower bound (CRLB) of our two-stage CSI estimator is improved by $\textsf{O}\big(1/M\big)$, compared with the linear MMSE estimator. For the target MSE, our near-ML estimator thus attains an extra gain that increases according to $\textsf{O}(\log_{10} M)$. As a result, the channel estimation accuracy is dramatically improved, which in turn enhances the communication coverage and the channel capacity.

\item We design a fast method to implement this rank-1 subspace channel estimator, in the case of low-rank channel matrix. By utilizing the inherent low-rank property of a large Hankel matrix, we first approximate it via three small matrices. Then, the singular value decomposition (SVD) of the large Hankel matrix is replaced with the SVD of these small matrices. This reduces the computational complexity of our estimator from $\textsf{O}\big(KM^3\big)$ to $\textsf{O}\big(KP^2M\big)$~$(P\ll M)$. Moreover, we show that our fast method attains the high-resolution AoA estimation, without compromising the accuracy.

\item The theoretical performance analysis of our new estimator is consolidated by comprehensive numerical simulation results, which demonstrate that our method indeed outperforms the linear MMSE estimator and the sparsity-based methods, while dramatically reducing the computational complexity by 2$\thicksim$3 orders of magnitude. In other words, the substantial performance gain is attained by imposing a much lower complexity. Thus, our new CSI estimator substantially extends the attainable complexity-accuracy region of CSI estimation for massive MIMO systems.
\end{enumerate}

The rest of our paper is organized as follows. In Section~\ref{S2}, we analyze the AoA-aided post-reception beamforming for acquiring the channel gains in the single-user case, for emphasizing the critical role of accurate AoA estimation. Section~\ref{S3} details our proposed rank-1 subspace channel estimator, which involves the successive estimation of AoAs and channel gains. In Section~\ref{S4}, we theoretically analyze the estimation accuracy of our method. In Section~\ref{S5} a fast method is developed to implement our estimator, and it remarkably reduces the computational complexity without degrading the accuracy. Numerical simulations are provided in Section~\ref{S6} to validate our method, and we conclude this paper in Section~\ref{S7}.

This paper adopts the following notation conventions. Boldface lower-case and capital letters denote vectors and matrices, respectively. $\bm{X}^{*}$, $\bm{X}^{\rm T}$ and $\bm{X}^{\rm H}$ represent the conjugate, the transpose and the conjugate transpose of $\bm{X}$, respectively, while $\bm{X}^{-1}$ and $\bm{X}^\dag=\bm{X}^{\rm H}\big(\bm{X}\bm{X}^{\rm H}\big)^{-1}$ are the inverse and the Moore-Penrose pseudo-inverse of $\bm{X}$, respectively. $\vect(\bm{H})$ is the vectorization of $\bm{H}$, $\otimes$ is the Kronecker product, and $\lceil \cdot \rceil$ is the ceiling operator. $\bm{I}_M$ is the $M \times M$ identity matrix, and  $\|\cdot\|_F$ is the Frobenius norm, while $\mathsf{E}[\cdot ]$ denotes the expectation operator and $\text{var}[\cdot]$ stands for variance.

\section{Channel Estimation Aided With AoA}\label{S2}
In the uplink training of massive MIMO systems, if the AoA information is available to the BS, it can be utilized in the post-reception beamforming to improve the channel estimation accuracy. In this section, we first analyze the AoA-aided post-reception beamforming for acquiring the channel gains in a single-user case, upon assuming perfect AoA information. This provides some insight on how the post-reception beamforming improves the channel estimation accuracy. We next derive a necessary condition on the estimated AoA information in order to attain the \emph{unbiased} channel estimation, and thereby show that classical FFT-based angular-domain channel estimators may fail to satisfy this condition.

We consider the up-link channel estimation with a single-antenna user equipment (UE), which transmits the pilot signal of length $B\geq 1$, while the BS equipped with $M$ antennas estimates the unknown channel gains. Hence, the received signal $\bm{Y}\in\mathbb{C}^{M\times B}$ is given by
\begin{align} \label{eq1}
\bm{Y} =& \bm{h} \bm{x}^{\rm H} + \bm{N},
\end{align}
where $\bm{x}\in \mathbb{C}^{B\times 1}$ is the random pilot signal (e.g., i.i.d. Gaussian sequence) with a covariance matrix $\sigma_x^2{\mathbf I}_B$, $\bm{N}\in \mathbb{C}^{M\times B}$ is the additive white Gaussian noise (AWGN) matrix whose entries have zero mean and variance $\sigma_n^2$, and $\bm{h}\in \mathbb{C}^{M\times 1}$ is the wireless channel vector. Without loss of generality, we use the Saleh-Valenzuela (S-V) channel model to capture both the spatial sparsity and the low-rank property of massive MIMO channels \cite{Brady2013Beamspace,alkhateeb2014channel,2016Beamspace,Li2013On,li2015a}.
The low-rank channel occurs widely in massive MIMO communications, especially when the BS  deployed in high buildings suppresses local scattering \cite{Xie2017An,shen2015joint} and the signal propagation of multiple clustered users shares the same reflectors.
Then, the channel vector between the UE and the BS is given by \cite{Xie2017An,shen2015joint,2016Beamspace,Li2020Randomized}
\begin{align} \label{eq2}
\bm{h} =& \sum_{p=0}^{P-1} \alpha_p \bm{a}_M\big(\theta_p\big) = \sum_{p=0}^{P-1} \bm{c}_p ,
\end{align}
where $P$ is the number of paths; $\bm{c}_0\triangleq \alpha_0\bm{a}_M\big(\theta_0\big)$ is the line-of-sight (LoS) component with the complex-valued gain $\alpha_0$ and the AoA $\theta_0$; while $\bm{c}_p\triangleq \alpha_p\bm{a}_M\big(\theta_p\big)$, $1\le p\le P-1$, denotes the $p$th non-line-of-sight (NLoS) component that has the complex-valued gain $\alpha_p$ and the AoA $\theta_p$ \cite{gustafson2014on,shen2015joint}. As shown in \cite{akdeniz2014millimeter}, when operating in the high frequency band (e.g., millimeter-wave band), the number of paths typically satisfy $P \leq 5$ in realistic environments. If ULA is used,
the steering vector $\bm{a}_M(\theta )\in \mathbb{C}^{M \times 1}$ corresponding to the AoA $\theta$ reads
\begin{align} \label{eq3}
\bm{a}_M(\theta ) \triangleq & \Big[1 ~ \exp \Big( \tfrac{j 2\pi d \sin\theta}{\lambda} \Big) ~ \cdots ~
\exp \Big( \tfrac{j2\pi (M-1)d \sin\theta}{\lambda} \Big)  \Big]^{\rm T} .
\end{align}
Here, $\lambda$ is the carrier wavelength and $d$ is the antenna spacing.

If we further define the received signal vector as
\begin{align} \label{eq4}
\bm{y} \triangleq & \bm{Y}\bm{x}\in \mathbb{C}^{M\times 1} ,
\end{align}
and assume that the accurate AoA information of $P$ paths, $\big\{\widehat{\theta}_p \big\}_{p=0}^{P-1}$, is available, then we estimate the channel gains by invoking a post-reception beamforming as
\begin{align} \label{eq5}
\widehat{\alpha}_p =& \frac{1}{M} \bm{a}_M^{\rm H}\big(\widehat{\theta}_p\big) \bm{y} , ~ 0\le p\le P-1 .
\end{align}

\begin{theorem} \label{T1}
When the uplink AoAs of a single-antenna UE are exactly known, i.e., $\widehat{\theta}_p=\theta_p$ for $0\le p\le P-1$, and the number of BS antennas $M$ is large, the channel gains estimated by using the post-reception beamforming (\ref{eq5}) are unbiased, i.e., $\mathsf{E}\big[\hat{\alpha}_p\big]=\alpha_p$, with the estimation error bounded by $\frac{\sigma_n^2}{MB\sigma_x^2}$, for $0\le p\le P-1$.
\end{theorem}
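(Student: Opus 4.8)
The plan is to substitute the signal model into the beamformer and split the output into a deterministic signal part and a zero-mean noise part. Inserting $\bm{Y}=\bm{h}\bm{x}^{\rm H}+\bm{N}$ into $\bm{y}=\bm{Y}\bm{x}$ gives $\bm{y}=(\bm{x}^{\rm H}\bm{x})\,\bm{h}+\bm{N}\bm{x}$, where the pilot energy is $\bm{x}^{\rm H}\bm{x}=B\sigma_x^2$. Setting $\widehat{\theta}_p=\theta_p$ in (\ref{eq5}) and normalizing the beamformer output by this known pilot energy, I would write
\begin{align}
\widehat{\alpha}_p=\frac{1}{M}\,\bm{a}_M^{\rm H}(\theta_p)\,\bm{h}+\frac{1}{MB\sigma_x^2}\,\bm{a}_M^{\rm H}(\theta_p)\,\bm{N}\bm{x}.
\end{align}
The first term is controlled by the near-orthogonality of the steering vectors, while the second governs the estimation error.

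For the signal term I would expand $\bm{h}=\sum_{q=0}^{P-1}\alpha_q\,\bm{a}_M(\theta_q)$ and evaluate $\bm{a}_M^{\rm H}(\theta_p)\bm{a}_M(\theta_q)=\sum_{m=0}^{M-1}\exp\!\big(j2\pi m d(\sin\theta_q-\sin\theta_p)/\lambda\big)$. For $q=p$ this equals $M$ and contributes exactly $\alpha_p$ after the $1/M$ scaling; for $q\neq p$ it is a geometric (Dirichlet-kernel) sum whose magnitude is bounded by $1/\big|\sin\!\big(\pi d(\sin\theta_q-\sin\theta_p)/\lambda\big)\big|$, a constant independent of $M$. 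Hence the cross terms are $\textsf{O}(1/M)$ and vanish as $M$ grows, so $\frac{1}{M}\bm{a}_M^{\rm H}(\theta_p)\bm{h}\to\alpha_p$. Taking $\mathsf{E}[\cdot]$ and using $\mathsf{E}[\bm{N}]=\bm{0}$ to annihilate the noise term then yields $\mathsf{E}[\widehat{\alpha}_p]=\alpha_p$ in the large-$M$ regime, i.e. asymptotic unbiasedness. I expect this leakage-control step to be the main obstacle: it is precisely where the assumptions of large $M$ and of distinct, well-separated AoAs enter, guaranteeing that the denominator above stays bounded away from zero so the residual bias is genuinely $\textsf{O}(1/M)$.

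For the error, since the estimator is unbiased its mean-squared error equals the variance of the noise term $n_p=\frac{1}{MB\sigma_x^2}\bm{a}_M^{\rm H}(\theta_p)\bm{N}\bm{x}$. Using the whiteness of the noise, $\mathsf{E}[N_{mb}N_{m'b'}^{*}]=\sigma_n^2\delta_{mm'}\delta_{bb'}$, together with $\|\bm{a}_M(\theta_p)\|^2=M$ and $\bm{x}^{\rm H}\bm{x}=B\sigma_x^2$, a direct second-moment computation (conditioned on the pilot) gives $\mathsf{E}\big[|\bm{a}_M^{\rm H}(\theta_p)\bm{N}\bm{x}|^2\big]=\sigma_n^2\,M\,B\sigma_x^2$, whence $\text{var}[n_p]=\sigma_n^2 M B\sigma_x^2/(MB\sigma_x^2)^2=\sigma_n^2/(MB\sigma_x^2)$, matching the claimed bound. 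Combining the vanishing $\textsf{O}(1/M)$ bias with this noise variance completes the argument; the only subtlety is that the variance identity is exact for a constant-modulus pilot and should otherwise be read as the pilot-conditional error.
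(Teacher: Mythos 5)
Your proposal is correct and follows essentially the same route as the paper's Appendix~A proof: decompose the beamformer output into a signal term handled by steering-vector orthogonality ($\bm{a}_M^{\rm H}(\theta_p)\bm{a}_M(\theta_p)=M$ with asymptotically vanishing cross terms) plus a zero-mean noise term whose variance is $\sigma_n^2/(MB\sigma_x^2)$. You are in fact slightly more careful than the paper on two points it glosses over --- you make the pilot-energy normalization $\bm{x}^{\rm H}\bm{x}=B\sigma_x^2$ explicit and bound the leakage as $\textsf{O}(1/M)$ via the Dirichlet-kernel sum rather than asserting the unnormalized inner product tends to zero --- while the paper instead closes by citing the CRLB of the linear Gaussian model, which your direct second-moment computation reproduces.
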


\begin{proof}
See Appendix~\ref{ApA}.
\end{proof}
It can be seen that in this idealized scenario with perfect $P$ AoAs, the estimation accuracy for the channel gains approaches the optimum, achieving a CRLB of $\frac{\sigma_n^2}{MB\sigma_x^2}$.
In practice, however, the BS does not have the perfect AoA information and can only estimate these $P$ AoAs. Naturally, the accuracy of the estimated AoAs, $\big\{\widehat{\theta}_p\big\}_{p=0}^{P-1}$, affects the accuracy of estimating the channel gains, when using the post-reception beamforming (\ref{eq5}).

\begin{theorem} \label{T2}
To achieve the unbiased estimation of the channel gains via (\ref{eq5}), a necessary condition is that the estimation error of the AoA information must satisfy:
\begin{align} \label{eq6}
\big|\widehat{\theta}_{p} - \theta_{p}\big| \textcolor{black}{=} & \textsf{O}\big(1/M^{1+\varepsilon}\big ) , \, \exists \, \varepsilon>0, \forall p .
\end{align}
\end{theorem}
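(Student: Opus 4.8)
The plan is to compute the expectation $\mathsf{E}\big[\widehat{\alpha}_p\big]$ under an imperfect AoA $\widehat{\theta}_p\neq\theta_p$ and to isolate the bias as an explicit function of the angular error $\Delta_p\triangleq\widehat{\theta}_p-\theta_p$. First I would substitute $\bm{y}=\bm{Y}\bm{x}=\bm{h}\,\bm{x}^{\rm H}\bm{x}+\bm{N}\bm{x}$ into the estimator (\ref{eq5}) and take the expectation over the zero-mean noise, which annihilates the $\bm{N}\bm{x}$ term. Following the same steps and pilot-energy normalization as in the proof of Theorem~\ref{T1}, this leaves the deterministic quantity $\mathsf{E}\big[\widehat{\alpha}_p\big]=\tfrac{1}{M}\bm{a}_M^{\rm H}(\widehat{\theta}_p)\bm{h}$, whose perfect-AoA specialization recovers $\alpha_p$. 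Expanding $\bm{h}=\sum_{q}\alpha_q\,\bm{a}_M(\theta_q)$ then splits this into a diagonal term ($q=p$) and cross terms ($q\neq p$).

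Next I would dispose of the cross terms. Since a small error keeps $\widehat{\theta}_p$ bounded away from every distinct $\theta_q$ ($q\neq p$), each inner product $\bm{a}_M^{\rm H}(\widehat{\theta}_p)\bm{a}_M(\theta_q)$ is a geometric (Dirichlet) sum of modulus $\textsf{O}(1)$, so after the $\tfrac{1}{M}$ factor these contributions vanish as $\textsf{O}(1/M)$, exactly as in Theorem~\ref{T1}. Consequently the entire asymptotic bias is governed by the diagonal term $\tfrac{\alpha_p}{M}\bm{a}_M^{\rm H}(\widehat{\theta}_p)\bm{a}_M(\theta_p)$, and asymptotic unbiasedness is equivalent to this single term tending to $\alpha_p$.

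The crux is evaluating that diagonal term. Setting $\phi_p\triangleq\tfrac{2\pi d}{\lambda}\big(\sin\theta_p-\sin\widehat{\theta}_p\big)$, the inner product is $\sum_{m=0}^{M-1}e^{jm\phi_p}$, so that $\tfrac{1}{M}\bm{a}_M^{\rm H}(\widehat{\theta}_p)\bm{a}_M(\theta_p)=e^{j(M-1)\phi_p/2}\,\tfrac{\sin(M\phi_p/2)}{M\sin(\phi_p/2)}$, the normalized array beampattern. A first-order Taylor step gives $\phi_p\approx-\tfrac{2\pi d}{\lambda}\cos\theta_p\,\Delta_p$, so the governing variable is the product $M\phi_p\propto M\Delta_p$. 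Expanding the beampattern for small $M\phi_p$ yields $\mathsf{E}\big[\widehat{\alpha}_p\big]-\alpha_p\approx\alpha_p\big(j\,M\phi_p/2+\textsf{O}((M\phi_p)^2)\big)$, i.e. a bias of order $|\alpha_p|\,M\,|\Delta_p|$. Requiring this to vanish forces $M|\Delta_p|\to 0$, i.e. $|\Delta_p|=\textsf{O}(1/M^{1+\varepsilon})$ for some $\varepsilon>0$, which is the claimed condition.

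The main obstacle I anticipate is the final translation: the beampattern analysis delivers directly the asymptotic statement $M|\Delta_p|\to 0$, whereas the theorem records it as the explicit polynomial rate $\textsf{O}(1/M^{1+\varepsilon})$. I would bridge this by arguing the converse at the level of rates: whenever the error decays no faster than $\Theta(1/M)$, the product $M\phi_p$ tends to a nonzero constant (or diverges), so the beampattern modulus $\big|\sin(M\phi_p/2)/(M\sin(\phi_p/2))\big|$ is bounded strictly below unity and a persistent, non-vanishing bias survives; hence a strictly super-$1/M$ decay is indispensable, and $1/M^{1+\varepsilon}$ serves as the representative admissible rate. Care is also needed to confirm that the quadratic Taylor remainder and the phase of $\phi_p$ cannot cancel the leading imaginary bias, which I would settle by retaining the exact beampattern modulus rather than its linearization.
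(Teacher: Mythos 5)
Your proof is correct and reaches the same governing condition as the paper, but by a genuinely different route. The paper never touches the exact array response: it imports from the beamforming literature a circularly symmetric Gaussian model of the main lobe, $G(\theta)=M\exp\big(-\beta(\theta-\theta_p)^2\big)$ with $\beta=\ln(2)/\Delta\theta_{-3\text{dB}}^2$ and $\Delta\theta_{-3\text{dB}}=2/M$, so that $\mathsf{E}\big[\widehat{\alpha}_p\big]=\exp\big(-2\ln(2)M^2(\widehat{\theta}_p-\theta_p)^2\big)\alpha_p$, and then reads off the necessity of $M^2(\widehat{\theta}_p-\theta_p)^2\rightarrow 0$. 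You instead evaluate the exact Dirichlet kernel $e^{j(M-1)\phi_p/2}\sin(M\phi_p/2)/\big(M\sin(\phi_p/2)\big)$ and linearize in $M\phi_p$. Your version is self-contained and in one respect sharper: it controls the cross terms $q\neq p$ explicitly (the Gaussian main-lobe model silently ignores them), and it exposes that the leading bias is the first-order phase rotation $jM\phi_p/2$, which the magnitude-only Gaussian model cannot see (there the attenuation is second order in $M\delta_p$); both analyses nonetheless collapse to the same threshold $M|\Delta_p|\rightarrow 0$. Two caveats are worth recording. First, your Taylor step $\phi_p\approx-\frac{2\pi d}{\lambda}\cos\theta_p\,\Delta_p$ degenerates at endfire ($\cos\theta_p\rightarrow 0$), where the true condition weakens; the paper's angle-domain Gaussian model formally sidesteps this case. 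Second, the translation from $M|\Delta_p|\rightarrow 0$ (i.e., $\Delta_p=o(1/M)$) to the stated polynomial rate $\textsf{O}\big(1/M^{1+\varepsilon}\big)$ is, strictly speaking, a representative sufficient phrasing rather than an equivalence (e.g., $\Delta_p=1/(M\log M)$ satisfies the former but not the latter); you flag this honestly, and the paper's own proof commits exactly the same looseness, so it is not a gap relative to the target.
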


\begin{proof}
See Appendix~\ref{ApB}.
\end{proof}
As mentioned, there exist some angular-domain channel estimators, which first acquire the AoAs and then estimate the channel gains \cite{fan2017angle}. In terms of the achievable MSE, such angular-domain estimators \cite{fan2017angle} are superior to the LS estimator, but inferior to the linear MMSE estimator. The main reason is that, in the case of ULA, the information available at the BS is limited and thus inadequate for classical subspace methods to acquire high-resolution AoAs\footnote{Another 2D MUSIC method has been studied in \cite{guo2017millimeter}, which also becomes inapplicable to the case considered here, i.e., one ULA serves multiple UEs, where the rank of the covariance matrix $\bm{R}_{\bm{y}}=\frac{1}{M}\bm{y}\bm{y}^{\rm H}$ is one.}. Thus, only the FFT method can be applicable to acquiring AoAs, but it attains low-resolution and even biased AoAs, due to the limited FFT length (recall that the angular resolution is $\Delta\theta=1/M$ in the FFT method). As a result, the channel gain estimation upon using the post-reception beamforming is biased unfortunately. Then, the FFT-based AoA estimation, with an error decaying rate obeying $\textsf{O}(1/M)$, is insufficient to achieve the unbiased channel estimation via the post-reception beamforming. This will also be demonstrated later by our numerical simulations, where the FFT-based method exhibits the MSE floor in the high-SNR region.

The above discussion emphasizes the significant importance of the viable, high-resolution AoA estimates for the accurate CSI acquisition in massive MIMO systems, which partially motivates our work presented in this paper.

\section{Rank-1 Subspace Channel Estimator}\label{S3}

In this section, we first introduce a new approach to acquire high-resolution AoA information, which then assists us to obtain the unbiased and accurate channel estimation via the post-reception beamforming. In the case of ULA with $K$ UEs, the received signal reduces to an $M\times 1$ spatial-domain vector, since the other dimension of the signal matrix vanishes after removing the multi-user interference (MUI), as will be seen later. Thus, classical subspace methods (e.g. standard MUSIC) relying on the covariance matrix becomes invalid. To overcome this difficulty, we design a novel rank-1 subspace channel estimator, by leveraging the \emph{space-embedding} technique in analogy to the time-embedding technique used in modeling dynamical systems \cite{1996Dynamical,2015Hankelet,georgakis2018dynamic}. Note that the spatial smoothing MUSIC technique is also applicable in this context.

\subsection{Single-User Uplink Channel Estimation}\label{rank_1_esti} 
We begin with the single-user case ($K=1$). Let the length of pilot signal be $B\geq K= 1$. The elements of the received signal vector $\bm{y}=[y_0 ~ y_1 \cdots y_{M-1}]^{\rm H}$ of (\ref{eq4}) can be expressed for $0\le m\le M-1$ as
\begin{align}\label{eq7}
y_m =& \sum\limits_{p=0}^{P-1} \alpha_p \exp\left( \frac{j2\pi m d \sin\theta_p}{\lambda}\right) + n_m .
\end{align}
It can be seen that the SNR is increased by $B$ times, since we have  $\text{var}\big[n_m\big]=\sigma_n^2/B\sigma_x^2$. By defining the steering matrix as
\begin{align}\label{eq8}
\bm{A}_M(\bm{\theta}) =& \left[ \bm{a}_M(\theta_0) ~ \bm{a}_M(\theta_1) ~ \cdots ~ \bm{a}_M(\theta_{P-1}) \right] \in \mathbb{C}^{M \times P},
\end{align}
with $\bm{\theta}=[\theta_0 ~ \theta_1 \cdots \theta_{P-1}]^{\rm H}$, the received signal vector can be restructured as
\begin{align}\label{eq9}
\bm{y} =& \bm{A}_M(\bm{\theta}) \bm{\alpha} + \bm{n},
\end{align}
where we have $\bm{\alpha}=\big[\alpha_0 ~ \alpha_1 \cdots \alpha_{P-1}\big]^{\rm H} \in \mathbb{C}^{P \times 1}$.

To extract the unknown AoA information $\bm{\theta}$, we adopt a novel rank-1 subspace method which allows for high-resolution AoA estimation \cite{hacker2010single,liao2016music}. To be specific, we first construct a space-embedding Hankel matrix $\bar{\bm{Y}}\in \mathbb{C}^{L\times (M-L)}$ from the received signal vector $\bm{y}$, i.e.,
\begin{align}\label{eq:rank_limited} 
 \bar{\bm{Y}}\! \triangleq &\! \left[\!\! \begin{array}{cccc}
  y_0    \!\! &\! y_1   \! &\! \cdots\! &\! y_{M-L-1} \\
  y_1    \!\! &\! y_2   \! &\! \cdots\! &\! y_{M-L} \\
  \vdots \!\! &\! \vdots\! &\! \ddots\! &\! \vdots \\
  y_{L-1}\!\! &\! y_{L} \! &\! \cdots\! &\! y_{M-1}
 \end{array}\!\! \right]\!  
 \! =\! \bar{\bm{A}}_L \bm{\Lambda} \bar{\bm{A}}_{M-L}^{\rm H}\! +\! \bar{\bm{N}} ,\!
\end{align}
where we have the diagonal matrix $\bm{\Lambda}\triangleq\text{diag}\{\alpha_0,\alpha_1,\cdots,\alpha_{P-1}\}$, the stack length $L$ satisfies $L\geq P$ and ${M-L}\geq P$, while $\bar{\bm{A}}_L\in \mathbb{C}^{L \times P}$ is the sub-matrix consisting of
the first $L$ rows of $\bm{A}_M(\bm{\theta})$ given in (\ref{eq8}).
\textcolor{black}{Since the above Hankel matrix $\bar{\bm{Y}}$ admits the Vandermonde decomposition, one can directly obtain the signal space and noise subspace by performing SVD on $\bar{\bm{Y}}$ \cite{liao2016music}.}

Thus, like other subspace-based methods, e.g., the MUSIC method \cite{schmidt1986multiple,Li2019Fast,roy1989esprit-estimation}, in order to accomplish  high-resolution AoA estimation we first compute the SVD of $\bar{\bm{Y}}$ as
\begin{align}\label{eq11}
\bar{\bm{Y}} =& \bm{U} \bm{\Sigma} \bm{V}^{\rm H} = \bm{U}_P \bm{\Sigma}_P \bm{V}_P^{\rm H} + \bm{U}_{-P} \pmb{\Sigma}_n  \bm{V}_{-P}^{\rm H},
\end{align}
where we denote the left singular matrix $\bm{U}\in \mathbb{C}^{L\times L}$, the diagonal singular value matrix $\bm{\Sigma}_P=\text{diag}\big\{\sigma_1(\bar{\bm{Y}}),\sigma_2(\bar{\bm{Y}}),\cdots ,\sigma_P(\bar{\bm{Y}})\big\}$ and $\bm{\Sigma}_n=\text{diag}\big\{\sigma_{P+1}(\bar{\bm{Y}}),\sigma_{P+2}(\bar{\bm{Y}}),\cdots ,\sigma_{L}(\bar{\bm{Y}})\big\}$, with the singular values $\sigma_1(\bar{\bm{Y}})\geq \cdots \geq \sigma_P(\bar{\bm{Y}})\ge\sigma_{P+1}(\bar{\bm{Y}})= \cdots =\sigma_L(\bar{\bm{Y}})^2$, and the right singular matrix is $\bm{V}\in\mathbb{C}^{(M-L)\times L}$, while $\bm{U}_P\in \mathbb{C}^{L\times P}$ is the submatrix consisting of the first $P$ columns of $\bm{U}$ and it corresponds to the signal subspace,
and $\bm{V}_P\in \mathbb{C}^{(M-L)\times P}$ is the submatrix consisting of the first $P$ columns of $\bm{V}$. Furthermore, $\bm{U}_{-P}\in \mathbb{C}^{L\times (L-P)}$ is the submatrix consisting of the last $(L-P)$ columns of $\bm{U}$ and it corresponds to the noise subspace, and $\bm{V}_{-P}\in \mathbb{C}^{(M-L)\times (L-P)}$ is the submatrix consisting of the last $(L-P)$ columns of $\bm{V}$.
\textcolor{black}{Note that the first $P$ singular values of matrix $\bar{\bm{Y}}$ are significantly larger than the remaining $L-P$ singular values (e.g., in the high-SNR case).
Thus, we refer $\bar{\bm{Y}}$ as an approximate low-rank matrix, i.e., $\text{rank}\big(\bar{\bm{Y}}\big)=\text{rank}(\bm{\Lambda})\approx P$.}

For each angle $\theta \in [-\pi, ~ \pi]$, the spatial pseudo-spectrum $P(\theta)$ is then estimated via \textcolor{black}{\cite{liao2016music}}:
\begin{align}\label{eq12}
P(\theta) =& \frac{1}{\bm{a}_L^{\rm H}(\theta)\big(\bm{I}_L - \bm{U}_{P}\bm{U}_{P}^{\rm H}\big) \bm{a}_L(\theta)},
\end{align}
where $\bm{a}_L(\theta)\in \mathbb{C}^{L \times 1}$ denotes the subvector consisting of the first $L$ elements of $\bm{a}_M(\theta)$. Upon evaluating the spatial pseudo-spectrum with the total $N$ discretized angles $\{-\pi+n \Delta\theta,~ n=0,1,\cdots,N-1\}$, where $\Delta\theta =2\pi/N$ is the angle resolution and $N$ is a sufficiently large integer, multiple AoAs of unknown channel paths can be extracted, e.g., by identifying the $P$ highest peaks in $P(\theta)$:
\begin{align}\label{eq13}
\widehat{\bm{\theta}} \triangleq & \Big\{ \widehat{\theta}_p ~ | ~ \widehat{\theta}_p=\texttt{peak}[P(\theta)] , ~ 0\le p\le P-1 \big\}.
\end{align}

After obtaining the unknown AoAs $\big\{\widehat{\theta}_p\big\}_{p=0}^{P-1}$, the estimated channel gains $\big\{\widehat{\alpha}_p\big\}_{p=0}^{P-1}$ can be computed using the post-reception beamforming (\ref{eq5}).
In fact, based on the {\color{black} obtained highly accurate AoA information} our estimator essentially achieves the ML estimation:
\begin{align} \label{eq:ML_esti} 
\widehat{\alpha}_p =& \arg \max\limits_{\alpha_p\in \mathbb{C}} p\left( \frac{1}{M}\bm{a}_M^{\rm H}\big(\widehat{\theta}_p\big)(\bm{h}-\bm{y})\big| \alpha_p\right) ,
\end{align}
where the likelihood is given by
\begin{align}\label{eq15}
p\left( \frac{1}{M}\bm{a}_M^{\rm H}\big(\widehat{\theta}_p\big) (\bm{h}-\bm{y})\big| \alpha_p\right) \approx & p\left( \alpha_p - \frac{1}{M}\bm{a}_M^{\rm H}\big(\widehat{\theta}_p\big)\bm{y}\right) \nonumber \\
\approx & p\left( \frac{1}{M}\bm{a}_M^{\rm H}\big(\widehat{\theta}_p\big) \bm{n}\right),
\end{align}
since $\frac{1}{M}\bm{a}_M^{\rm H}\big(\widehat{\theta}_p\big)\bm{a}_M^{\rm H}\big(\theta_p\big)\approx 1$ and $\frac{1}{M}\bm{a}_M^{\rm H}\big(\widehat{\theta}_p\big)\bm{a}_M^{\rm H}\big(\theta_{p'}\big)\approx 0$, $p\neq p'$, for large $M$ and accurate $\widehat{\theta}_p$. That is, it is a Gaussian distribution, and solving the optimization problem (\ref{eq:ML_esti}) leads to the ML solution of (\ref{eq5}). Finally, we arrive at the equivalent angular-domain channel estimate
\begin{align}\label{eq16}
\widehat{\bm{h}} =&  \sum\limits_{p=0}^{P-1} \widehat{\alpha}_p \bm{a}_M\big(\widehat{\theta}_p\big) .
\end{align}

\vspace{-0.5cm}
\subsection{Multiple-User Uplink Channel Estimation}\label{S3.2}
Now consider the multi-user MIMO (MU-MIMO) case, where the BS equipped with $M$ antennas serves $K$ single-antenna users. Given the pilot signal $\bm{X}=\big[ \bm{x}_1 ~ \bm{x}_2 \cdots \bm{x}_K\big] \in \mathbb{C}^{B\times K}$, the received signal matrix $\bm{Y}\in\mathbb{C}^{M\times B}$ is given by
\begin{align} \label{eq:linear_gaussian} 
\bm{Y} =& \bm{H} \bm{X}^{\rm H} + \bm{N} ,
\end{align}
where again $\bm{N}$ is the channel AWGN matrix, and the channel matrix $\bm{H}=\big[\bm{h}_1 ~ \bm{h}_2 \cdots \bm{h}_K\big]\in\mathbb{C}^{M\times K}$ with
\begin{align} \label{eq18}
\bm{h}_k =& \sum \limits_{p=0}^{P-1} \alpha_{p,k} \bm{a}_M\big(\theta_{p,k}\big) , ~ 1\le k\le K .
\end{align}

By exploiting the orthogonal property for the pilot sequences of different users, namely, $\bm{x}_k^{\rm H}\bm{x}_k=1$ and $\bm{x}_k^{\rm H}\bm{x}_{k'}=0$ for $k,k'\in\{1,2,\cdots K\}$ and $k\neq k'$, we first obtain the received signal vector of the $k$th user:
\begin{align}\label{eq19}
\bm{y}_k =& \bm{Y} \bm{x}_k \in \mathbb{C}^{M\times 1}, ~ 1\le k\le K.
\end{align}
\textcolor{black}{For each signal vector $\bm{y}_k$, the MUI has been removed.
Thus, we will adopt the rank-1 subspace method in Section III-A to acquire the highly accurate AoAs of user $k$.}

To be specific, a space-embedding matrix $\bar{\bm{Y}}_k$ is constructed similar to (\ref{eq:rank_limited}).
On this basis, we can estimate the unknown channel vector of the $k$th user as $\widehat{\bm{h}}_k= \sum_{p=0}^{P-1}\widehat{\alpha}_{p,k} \bm{a}_M(\widehat{\theta}_{p,k})$. By repeating the above single-user estimator for $K$ times, the whole channel matrix can be estimated as $\widehat{\bm{H}}=\big[\widehat{\bm{h}}_1 ~ \widehat{\bm{h}}_2\cdots \widehat{\bm{h}}_k\big]$.

\vspace{-0.3cm}
\subsection{Complexity Analysis}\label{S3.3}
For the single-user case, the computational complexity of our rank-1 subspace channel estimator consists of three parts: 1)~the SVD of a space-embedding matrix $\bar{\bm{Y}} \in \mathbb{C}^{L\times (M-L)}$, which requires the computational complexity on the order of $\textsf{O}(M^3)$, where we have assumed $L= \textsf{O}(M)$, e.g., $L=\frac{M}{2}$; 2)~the calculation of pseudo-spectrum, which imposes the  complexity of $\textsf{O}(N L P)$ (note that this procedure involves only matrix multiplication and has highly efficient parallel acceleration, see \cite{roy1989esprit-estimation,Li2021}); 3)~the estimation of the unknown channel gains, which requires the computational complexity of $\textsf{O}(PM)$. Therefore, the overall complexity of our proposed estimator is on the order of $\textsf{O}(M^3)$.

For the MU-MIMO case with $K$ users, the computational complexity of our estimator is obviously on the order of $\textsf{O}(KM^3)$. In massive MIMO systems, $M$ is very large, and therefore this computational complexity is still too high. In Section~\ref{S5}, we further design a fast rank-1 subspace CSI estimator, relying on the randomized low-rank approximation technique \cite{woodruff2014sketching,wang2016spsd,drineas2005on,Li2019Fast}, with which the computational complexity of our channel estimator can be reduced dramatically, whilst the AoA estimation accuracy suffers no degradation.

%

\section{Theoretical Performance Analysis}\label{S4}

In this section, we analyze the accuracy of our rank-1 subspace channel estimator for massive MIMO uplink.

\vspace{-0.3cm}
\subsection{Error of Estimated Channel AoAs}\label{S4.1}
We first investigate the estimation accuracy of the AoA information of each user. We easily note that the norm of the Gaussian noise matrix in (\ref{eq:rank_limited}) obeys the following rule:
\begin{equation}\label{eq21}
\big\|\bar{\bm{N}}\big\|_2 \textcolor{black}{=} \textsf{O}\left(\sqrt{M\log_2(M)} \right).
\end{equation}
On this basis, if the stack length is chosen to be $L\thicksim \textsf{O}(M)$\textcolor{black}{\cite{liao2016music}}, e.g., $L=M/2$, we can acquire the accurate pseudo-spectrum estimation. When $M\rightarrow\infty$ and the Gaussian noise matrix $\bar{\bm{N}}$ meets the condition, $\big\|\bar{\bm{N}}\big\|_2^2\thicksim M\log_2(M)$, the error of the estimated pseudo-spectrum is bounded \cite{liao2016music}:
\begin{equation}\label{eq:error_spectrum} 
|P(\theta)- P_{0}(\theta)| \textcolor{black}{=} \textsf{O}\left(\frac{\sqrt{\log_2(M)}}{\sqrt{M}} \right),
\end{equation}
where $P_{0}(\theta)$ is the exact spatial pseudo-spectrum in the absence of additive noise. As such, for massive MIMO uplink, we are able to obtain the highly accurate spatial pseudo-spectrum, even in the presence of additive noise.

Furthermore, the error of the estimated AoA is also bounded, as specified by the following Lemma~\ref{L1} \cite{liao2016music}.

\begin{lemma} \label{L1}
Assume that $L\geq P$ and $M-L\geq P$. Define
\begin{align}\label{eq23}
f(L) \triangleq & \frac{1}{L}\sqrt{\frac{2}{\pi}}\left( \frac{2}{\pi} - \frac{4}{L} \right)^{-\frac{1}{2}} .
\end{align}
Denote the minimal gap between two adjacent AoAs as $d=\min_{p\neq p'}\big|\theta_{p,k} -\theta_{p',k}\big|$. If
\begin{equation}\label{eq:minimal_gap} 
d \geq \max \left\{ f(L),f(M-L)\right \},
\end{equation}
then the AoA estimation error is bounded for large $M$:
\begin{equation}\label{eq:AoA_error} 
\delta_{p,k}\triangleq\big|\widehat{\theta}_{p,k}-\theta_{p,k}\big| \textcolor{black}{=} \textsf{O}\left( {\sqrt{\log_2(M)}}\big/ {M^{\frac{3}{2}}} \right) , ~ \forall p .
\end{equation}
\end{lemma}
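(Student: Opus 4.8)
The plan is to run a first-order perturbation analysis of the MUSIC null spectrum, treating the noise $\bar{\bm N}$ as a perturbation of the noiseless Hankel matrix $\bar{\bm Y}_0 \triangleq \bar{\bm A}_L \bm\Lambda \bar{\bm A}_{M-L}^{\rm H}$ and tracking how it displaces the stationary points of the pseudo-spectrum. Write $Q(\theta) \triangleq \bm{a}_L^{\rm H}(\theta)\bm{P}_n\bm{a}_L(\theta)$ with $\bm{P}_n \triangleq \bm{I}_L - \bm{U}_P\bm{U}_P^{\rm H}$ for the denominator of (\ref{eq12}), and let $Q_0$, $\bm{P}_{n,0}$ be the corresponding noiseless quantities. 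Because each true steering vector lies in the noiseless signal subspace, $\bm{P}_{n,0}\bm{a}_L(\theta_{p,k})=\bm{0}$, so $Q_0(\theta_{p,k})=0$ and $Q_0'(\theta_{p,k})=0$; i.e. every true AoA is a double zero of $Q_0$. By the peak-selection rule (\ref{eq13}), the estimate $\widehat\theta_{p,k}$ is a peak of $P=1/Q$ and therefore a stationary point, $Q'(\widehat\theta_{p,k})=0$.

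First I would Taylor-expand $Q'(\widehat\theta_{p,k})=0$ about $\theta_{p,k}$, obtaining $0\approx Q'(\theta_{p,k})+Q''(\theta_{p,k})\,\delta_{p,k}$. Decomposing $Q=Q_0+\Delta Q$ and invoking $Q_0'(\theta_{p,k})=0$ removes the noiseless part of the numerator, while $Q''(\theta_{p,k})\approx Q_0''(\theta_{p,k})$ to leading order. The error thus collapses to the ratio of a perturbation derivative to the noiseless curvature,
\[
  \delta_{p,k} \;\approx\; -\frac{\Delta Q'(\theta_{p,k})}{Q_0''(\theta_{p,k})} ,
\]
so the whole problem reduces to bounding these two quantities. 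For the numerator, since $\bm{P}_{n,0}\bm{a}_L(\theta_{p,k})=\bm{0}$, the leading term is $\Delta Q'(\theta_{p,k})\approx 2\,\mathrm{Re}\big[\bm{a}_L'^{\rm H}(\theta_{p,k})\,\Delta\bm{P}_n\,\bm{a}_L(\theta_{p,k})\big]$, where $\Delta\bm{P}_n$ is the perturbation of the noise-subspace projector. Wedin's $\sin\Theta$ theorem gives $\|\Delta\bm{P}_n\|_2 \lesssim \|\bar{\bm N}\|_2/\sigma_P(\bar{\bm Y}_0)$; feeding in $\|\bar{\bm N}\|_2=\textsf{O}(\sqrt{M\log_2 M})$ from (\ref{eq21}) together with the smallest signal singular value $\sigma_P(\bar{\bm Y}_0)\sim\sqrt{L(M-L)}\,\min_p|\alpha_{p,k}|\sim M$ (for $L=M/2$) yields $\|\Delta\bm{P}_n\|_2=\textsf{O}(\sqrt{\log_2 M}/\sqrt{M})$, in agreement with (\ref{eq:error_spectrum}). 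Using $\|\bm{a}_L\|_2=\sqrt{L}$ and $\|\bm{a}_L'\|_2\sim L^{3/2}$ (differentiation scales the $m$th entry by $\textsf{O}(m)$), the numerator obeys $|\Delta Q'(\theta_{p,k})|\lesssim\|\bm{a}_L'\|_2\,\|\Delta\bm{P}_n\|_2\,\|\bm{a}_L\|_2=\textsf{O}\big(M^{3/2}\sqrt{\log_2 M}\,\big)$.

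The main obstacle is the matching lower bound on the curvature, $Q_0''(\theta_{p,k})=2\,\bm{a}_L'^{\rm H}(\theta_{p,k})\bm{P}_{n,0}\bm{a}_L'(\theta_{p,k})$. A priori this could be far smaller than $\|\bm{a}_L'\|_2^2\sim L^3\sim M^3$, because the derivative $\bm{a}_L'(\theta_{p,k})$ might be nearly absorbed by the signal subspace when two AoAs are poorly separated; this is exactly where the separation hypothesis (\ref{eq:minimal_gap}) with $f(\cdot)$ from (\ref{eq23}) is indispensable. Under (\ref{eq:minimal_gap}) I would show that $\bar{\bm A}_L$ and $\bar{\bm A}_{M-L}$ stay well-conditioned (their smallest singular values bounded below by a constant multiple of $\sqrt{L}$ and $\sqrt{M-L}$ respectively) via Vandermonde/Gershgorin-type estimates as in \cite{liao2016music}, which simultaneously secures the $\sigma_P(\bar{\bm Y}_0)\sim M$ used above and forces $\bm{P}_{n,0}\bm{a}_L'(\theta_{p,k})$ to retain a constant fraction of $\|\bm{a}_L'\|_2$, giving $Q_0''(\theta_{p,k})\sim M^3$. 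Substituting both bounds into the displayed ratio then produces $\delta_{p,k}=\textsf{O}\big(\sqrt{\log_2 M}/M^{3/2}\big)$, which is (\ref{eq:AoA_error}). Throughout I would lean on the pseudo-spectrum stability estimate (\ref{eq:error_spectrum}) and the perturbation framework of \cite{liao2016music} to make these expansions rigorous rather than merely formal, in particular controlling the neglected higher-order terms in the Taylor expansion relative to the $\textsf{O}(1/M^{3/2})$ leading displacement.
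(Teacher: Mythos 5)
Your proposal is sound in outline, but it is worth noting up front that the paper contains no proof of Lemma~\ref{L1} at all: the lemma is imported verbatim from \cite{liao2016music}, and the surrounding text (Subsection~\ref{S4.1}) merely quotes the noise-norm bound (\ref{eq21}) and the spectrum-stability estimate (\ref{eq:error_spectrum}) from that reference. Your sketch essentially reconstructs the argument of that cited source: a first-order perturbation analysis of the MUSIC null spectrum, with the noise-subspace projector perturbation controlled by a Wedin $\sin\Theta$ bound $\|\Delta\bm{P}_n\|_2 \lesssim \|\bar{\bm N}\|_2/\sigma_P(\bar{\bm Y}_0)$, and with the separation condition (\ref{eq:minimal_gap}) entering exactly where it does in \cite{liao2016music} --- namely to lower-bound the extreme singular values of the Vandermonde factors $\bar{\bm A}_L$ and $\bar{\bm A}_{M-L}$ (via discrete Ingham-type inequalities, which is where the specific form of $f(L)$ in (\ref{eq23}) originates), securing both $\sigma_P(\bar{\bm Y}_0)\sim M$ and the curvature bound. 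Your order bookkeeping is correct: $|\Delta Q'(\theta_{p,k})| \lesssim L^{3/2}\cdot\sqrt{L}\cdot\textsf{O}\big(\sqrt{\log_2 M}/\sqrt{M}\big) = \textsf{O}\big(M^{3/2}\sqrt{\log_2 M}\big)$ against $Q_0''(\theta_{p,k})\sim M^3$ indeed yields (\ref{eq:AoA_error}).

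Two points would need care to make this a complete proof rather than a formal expansion. First, the linearization $0\approx Q'(\theta_{p,k})+Q''(\theta_{p,k})\delta_{p,k}$ presupposes a localization step: you must first show that the peak-selection rule (\ref{eq13}) returns an estimate inside a small neighborhood of each true AoA, so that the Taylor expansion is applied at the right stationary point. This follows from (\ref{eq:error_spectrum}) combined with the peak-to-baseline gap of the noiseless spectrum under (\ref{eq:minimal_gap}), but it is a separate argument you only gesture at. Second, your curvature lower bound $Q_0''(\theta_{p,k})\sim M^3$ hides a $\cos^2\theta_{p,k}$ Jacobian factor: since $\bm{a}_L'(\theta)$ scales with $(2\pi d/\lambda)\cos\theta$, the curvature degenerates near endfire ($\theta\to\pm\pi/2$), and the stated rate implicitly requires the AoAs to stay bounded away from endfire (equivalently, one should carry out the analysis in the spatial frequency $\omega=\sin\theta$, as \cite{liao2016music} effectively does, and only then map back to angle). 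Neither issue breaks your argument; both are resolved by the machinery of the cited reference that you already invoke.
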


Based on Lemma~\ref{L1}, it is seen that when $M$ is sufficiently large, as is the case for massive MIMO, the estimated AoA information is highly accurate. Also observe that the AoA estimation error meets the necessary condition for the unbiased channel gain estimate stated in Theorem~\ref{T2}.
{\color{black}More importantly, this high resolution AoA information enables the near-ML estimation of unknown CSI, as seen late in Section IV-C.}




\vspace{-0.3cm}
\subsection{Error of Estimated Channel Gains}\label{S4.2}
We now analyze the estimation accuracy of the channel gain estimates, obtained by the post-reception beamforming using our highly accurate AoA estimates. Denote the error of the $p$th AoA of the $k$th UE as $\delta_{p,k} = \big|\widehat{\theta}_{p,k}-\theta_{p,k}\big|$ and the variance of the pilot signal as $\sigma_x^2$. After the post-reception beamforming, the noise term $n_M$ in (\ref{eq:post_beam}) of Appendix~\ref{ApA} has the variance
\begin{align}\label{eq:def_hk} 
\bar{\sigma}_M^2 =&  \frac{1}{M} \exp \left( - 2\ln(2) M^2 \delta_{p,k}^2 \right) \frac{\sigma_n^2}{B\sigma_x^2} .
\end{align}
Here, the approximate Gaussian beamformer is considered; see (\ref{eq:Gauss_beamformer}) in Appendix~\ref{ApB}. From (\ref{eq:AoA_error}), for large $M$, we have $\delta_{p,k} \rightarrow \textsf{O}\big(\sqrt{\log_2(M)}\big/M^{1+\epsilon}\big)$, with $\epsilon \rightarrow 1/2$. By applying the Taylor series expansion $\exp(x)=1+x+o(x^2)$, we have
\begin{align} \label{eq:def_hk1} 
\bar{\sigma}_M^2 =& \frac{1}{M} \bigg(\! 1 -  2\ln(2) \textsf{O}\bigg( \frac{\log_2(M)}{M^{2\epsilon}} \bigg)\! \bigg) \frac{\sigma_n^2}{B\sigma_x^2} \overset{M~\text{is large}}{\longrightarrow} \frac{\sigma_n^2}{MB\sigma_x^2} .
\end{align}

Moreover, the channel gain estimation is the classical ML estimation (\ref{eq:ML_esti}), whose likelihood density is a complex Gaussian distribution with mean $\alpha_{p,k}$ and variance $\bar{\sigma}_M^2$, namely,
\begin{equation}\label{eq:Gauss_approximation} 
p\left( \alpha_{p,k} - \frac{1}{M}\bm{a}_M^{\rm H}\big(\widehat{\theta}_{p,k}\big)\bm{y}_k\right) \sim \mathcal{CN}\big(\alpha_{p,k},\bar{\sigma}_M^2)\big.
\end{equation}
The following theorem shows that the estimates $\big\{\widehat{\alpha}_{p,k}\big\}_{p=0}^{P-1}$ are unbiased with the estimation MSE bound $\sigma_n^2\big/(MB\sigma_x^2)$.

\begin{theorem} \label{T3}
When $L = \textsf{O}(M)$ (e.g., $L=M/2$) and the minimal gap between AoAs satisfies (\ref{eq:minimal_gap}), the channel gain estimated via the rank-1 subspace method is asymptotically unbiased for a large $M$, namely,
\begin{align} \label{eq29}
\mathsf{E}\big[\widehat{\alpha}_{p,k}\big] =& \alpha_{p,k}, ~ \text{for large } M,
\end{align}
and the estimation MSE is bounded by $\sigma_n^2\big/(MB\sigma_x^2)$.
\end{theorem}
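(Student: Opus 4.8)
The plan is to prove both assertions by decomposing the beamformer output and then invoking Lemma~\ref{L1} together with the variance computation already assembled in (\ref{eq:def_hk})--(\ref{eq:Gauss_approximation}). Substituting the signal model (\ref{eq9}) for user $k$ into the post-reception beamformer (\ref{eq5}), I would write
\begin{align}\label{eq:plan_decomp}
\widehat{\alpha}_{p,k} =&\ \alpha_{p,k}\,\tfrac{1}{M}\bm{a}_M^{\rm H}\big(\widehat{\theta}_{p,k}\big)\bm{a}_M\big(\theta_{p,k}\big) \nonumber\\
&+ \sum_{p'\neq p}\alpha_{p',k}\,\tfrac{1}{M}\bm{a}_M^{\rm H}\big(\widehat{\theta}_{p,k}\big)\bm{a}_M\big(\theta_{p',k}\big) \nonumber\\
&+ \tfrac{1}{M}\bm{a}_M^{\rm H}\big(\widehat{\theta}_{p,k}\big)\bm{n}_k ,
\end{align}
so that $\widehat{\alpha}_{p,k}$ splits into a deterministic \emph{self term}, a deterministic \emph{inter-path leakage} term, and a zero-mean \emph{filtered-noise} term. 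The whole argument then amounts to controlling each of these three pieces for large $M$.

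For the expectation, the filtered-noise term contributes nothing since $\bm{n}_k$ is zero-mean, so $\mathsf{E}\big[\widehat{\alpha}_{p,k}\big]$ equals the self term plus the leakage sum. The self-term coefficient $\tfrac{1}{M}\bm{a}_M^{\rm H}\big(\widehat{\theta}_{p,k}\big)\bm{a}_M\big(\theta_{p,k}\big)$ is a Dirichlet-type array-gain kernel whose phase argument scales as $M\big(\sin\widehat{\theta}_{p,k}-\sin\theta_{p,k}\big)\approx M\cos(\theta_{p,k})\,\delta_{p,k}$; I would show this kernel tends to $1$ by noting that Lemma~\ref{L1} gives $M\delta_{p,k}=\textsf{O}\big(\sqrt{\log_2 M}\big/\sqrt{M}\big)\to 0$. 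The leakage coefficients $\tfrac{1}{M}\bm{a}_M^{\rm H}\big(\widehat{\theta}_{p,k}\big)\bm{a}_M\big(\theta_{p',k}\big)$, $p'\neq p$, are the same kernel evaluated at an angular separation bounded below by the minimal gap (\ref{eq:minimal_gap}), and hence decay like $\textsf{O}(1/M)$; this is exactly the near-orthogonality already invoked in (\ref{eq15}). Combining, $\mathsf{E}\big[\widehat{\alpha}_{p,k}\big]\to\alpha_{p,k}$, which establishes (\ref{eq29}).

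For the MSE I would use the bias--variance decomposition. Since the bias vanishes by the previous step, the MSE is governed by the variance of the filtered-noise term, which is precisely $\bar{\sigma}_M^2$ of (\ref{eq:def_hk}) under the approximate Gaussian beamformer. Because the exponential factor in (\ref{eq:def_hk}) never exceeds one, $\bar{\sigma}_M^2\le\sigma_n^2/(MB\sigma_x^2)$ for every finite $M$, while (\ref{eq:def_hk1}) shows it converges to $\sigma_n^2/(MB\sigma_x^2)$; together with the complex-Gaussian likelihood (\ref{eq:Gauss_approximation}), whose variance attains this value, the estimation MSE is bounded by $\sigma_n^2/(MB\sigma_x^2)$, coinciding with the CRLB of Theorem~\ref{T1}.

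The main obstacle is the bias analysis rather than the variance bound: it is \emph{not} enough that the AoA error merely vanish. I would stress that the self-term kernel converges to one only if $M\delta_{p,k}\to 0$, i.e. the AoA error must decay strictly faster than $1/M$; a coarser $\textsf{O}(1/M)$ estimate, such as the FFT resolution criticized in Section~\ref{S2}, would leave $M\delta_{p,k}$ bounded away from zero and thus produce a residual bias and an MSE floor. Hence the crux is that the $\textsf{O}\big(\sqrt{\log_2 M}/M^{3/2}\big)$ rate of Lemma~\ref{L1}---equivalently the strict margin $\varepsilon\to 1/2$ in the necessary condition (\ref{eq6}) of Theorem~\ref{T2}---is exactly what the self term needs, and verifying this interplay together with the subordinacy of the leakage terms under the minimal-gap condition (\ref{eq:minimal_gap}) is the delicate part of the argument.
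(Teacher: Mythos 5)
Your proposal is correct in substance, but it reaches the result along a more explicit route than the paper's own proof. The paper's Appendix~\ref{ApC} is a two-line computation that leans entirely on the Gaussian main-lobe surrogate set up in Appendix~\ref{ApB}: it takes $\mathsf{E}\big[\widehat{\alpha}_{p,k}\big]=\exp\big(-2\ln(2)M^2\delta_{p,k}^2\big)\alpha_{p,k}$ directly from (\ref{eqB2}), Taylor-expands the exponential, and lets the rate of Lemma~\ref{L1} kill the exponent; inter-path leakage never appears, because the surrogate models only the main lobe and the cross terms were already dismissed as asymptotically orthogonal after (\ref{eq15}). You instead work with the exact Dirichlet-type kernels, splitting the beamformer output into a self term, a leakage sum, and filtered noise, and you verify separately that the leakage coefficients are $\textsf{O}(1/M)$ under the minimal-gap condition (\ref{eq:minimal_gap}) --- a step the paper silently absorbs, so your bias analysis is somewhat more self-contained; and your identification of $M\delta_{p,k}\to 0$ (equivalently the strict margin $\varepsilon>0$ of Theorem~\ref{T2}, with $\varepsilon\to 1/2$ from Lemma~\ref{L1}) as the crux is exactly the mechanism that drives the paper's exponential factor to one. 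On the MSE claim the two arguments point in opposite directions: the paper invokes the CRLB in (\ref{eqC2})--(\ref{eqC3}), so its ``bounded by $\sigma_n^2/(MB\sigma_x^2)$'' is a \emph{lower} bound that the near-ML estimator attains asymptotically, whereas your bias--variance decomposition reads it as an asymptotic \emph{upper} bound via $\bar{\sigma}_M^2\le\sigma_n^2/(MB\sigma_x^2)$. Both readings are defensible and consistent with asymptotic attainment of the CRLB, but note one small slippage in yours: at finite $M$ the MSE also contains the squared bias and the squared leakage, so the stated upper bound is only asymptotic --- those contributions are $\textsf{O}\big(\log_2^2(M)/M^2\big)$ and $\textsf{O}\big(1/M^2\big)$ respectively, hence $o(1/M)$, which is why your conclusion still stands but the phrase ``for every finite $M$'' should be dropped.
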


\begin{proof}
See Appendix~\ref{ApC}.
\end{proof}

Note that, unlike the classical FFT method whose channel gain estimate is biased, our rank-1 subspace method now attains the unbiased channel gain estimate.

\vspace{-0.3cm}
\subsection{Error of Estimated Channel Vector}\label{S4.3}
Based on the above error analysis on the estimates of the AoAs and the channel gains, in the angular domain, the residual error of the $k$th channel vector can be written as:
\begin{align}\label{eq:subset_s} 
& \mathsf{E} \Big[\big\|\widehat{\bm{h}}_k - \bm{h}_k\big\|_2^2\Big] = \mathsf{E} \left[\sum\nolimits_{p=0}^{P-1} \left|\widehat{h}_{p,k}-h_{p,k}\right|^2 \right] \nonumber \\
& \hspace*{10mm} \overset{\delta_{p,k}\rightarrow 0}{\approx} \mathsf{E} \left[ \sum\nolimits_{p=0}^{P-1} \left|\widehat{\alpha}_{p,k} \cos(\delta_{p,k})-\alpha_{p,k}\right|^2 \right].
\end{align}

\textcolor{black}{
Recall from (\ref{eq:AoA_error}) that the MSE of AoA estimates of our new method is $\textsf{O}\left( \frac{\log_2 M} {M^3} \right)$, while that of the ML method is $\textsf{O}\left( \frac{1} {M^3} \right)$ \cite{stoica1989music}. We argue that, although the MSE gap between our method and ML is $\textsf{O}\left( \log_2M \right)$, the near-ML CSI estimation will be attained in the high SNR region, given the shared dominate decaying rate of $\textsf{O}\left( \frac{1} {M^3} \right)$.}

\textcolor{black}{
To be specific, when the number of antennas is sufficiently large, e.g., $M\geq 100$, we are able to omit the error of AoA estimates in the high SNR region.
That is to say, we will asymptotically have $\cos\left(\delta_{p,k}\right) \overset{M~\text{is large}}{\longrightarrow} 1$.}
Further consider that the channel gains were estimated via the ML criterion \cite{kay1993fundamentals}, the CRLB of the channel gain estimation is achievable. As a result, the estimation error of the $p$th component in the $k$th channel vector is:
\begin{align}\label{eq31}
\mathsf{E}\left[\big|\widehat{h}_{p,k} - h_{p,k}\big|^2\right]  \overset{M~\text{is large}} \approx & \mathsf{E} \left[\big|\widehat{\alpha}_{p,k} -\alpha_{p,k}\big|^2 \right] \geq  \frac{\sigma_n^2}{MB\sigma_x^2}.
\end{align}

\vspace{-0.5cm}
\subsection{Gain Over Linear MMSE Estimator}\label{S4.4}
Based on the signal model (\ref{eq:linear_gaussian}), the classical linear MMSE estimator would also obtain an accurate channel estimation. But it requires the MIMO channel covariance matrix $\bm{R}_{\bm H} \in \mathbb{C}^{MK\times MK}$ and the channel AWGN covariance matrix $\bm{R}_{\rm N}\in \mathbb{C}^{KB\times KB}$ to compute the channel estimate \cite{kay1993fundamentals}
\begin{align}\label{eq32}
\text{vec}\left(\widehat{\bm{H}}^{\rm mmse}\right) =& \bm{R}_{\bm{H}} \tilde{\bm{X}}^{\rm H} \big(\tilde{\bm{X}} \bm{R}_{\bm{H}} \tilde{\bm{X}}^{\rm H} + \bm{R}_{\bm{N}}\big)^{-1} \text{vec}\left(\bm{Y}\right),
\end{align}
where $\tilde{\bm{X}}=\bm{X}^T\bigotimes\bm{I}$, and $\bigotimes$ is the Kronecker product. Note that, the acquisition of the MIMO channel covariance matrix and its inverse are very costly. So, another simpler LS channel estimator computes the MIMO channel matrix estimate as
\begin{align}\label{eq33}
\widehat{\bm{H}}^{\rm ls} =& \bm{Y} \bm{X}^\dag .
\end{align}
Yet it has a larger estimation error than the linear MMSE estimator.

Note that the linear MMSE estimator should not outperform the ML method whose CRLB of the estimation error is given in \cite{dong2002optimal,berriche2004cramer}.
Therefore, using this CRLB, we have
\begin{align}\label{eq34}
\mathsf{E}\left[\Big\|\widehat{h}_{p,k}^{\rm mmse} - h_{p,k}\Big\|_2^2\right] \geq \frac{\sigma_n^2}{B\sigma_x^2 + \rho_h^2 \sigma_n^2} ,
\end{align}
where $\mathsf{E}\left[\frac{\partial \ln p_{\bm{h}}(\bm{h})}{\partial \bm{h}^*} \frac{\partial^{\rm H} \ln p_{\bm{h}}(\bm{h})}{\partial \bm{h}^*} \right]\triangleq\rho_h^2 \bm{I}_M$; $\rho_h^2=P+1$ and $p_{\bm{h}}(\bm{h})$ is the prior distribution of the channel vector $\bm{h}_k$. 

Comparing (\ref{eq31}) with (\ref{eq34}), it can be seen that our rank-1 subspace estimator is capable of attaining an higher estimation accuracy than the linear MMSE estimator.

\begin{theorem}\label{T4}
Define the normalized MSE of the channel estimation as $\gamma=\mathsf{E}\Big[\big\|\widehat{\bm{H}}-\bm{H}\big\|_F^2\big/\|\bm{H}\|_F^2\Big]$. In order to achieve a target normalized MSE $\gamma$ (e.g., $\gamma<10^{-2}$), the required SNR for our rank-1 subspace method is ${\rm snr}_1\triangleq 10\log_{10} (E_1 /\sigma_n^2)$, while the required SNR for the linear MMSE estimator is ${\rm snr}_2\triangleq 10\log_{10}(E_{2}/\sigma_n^2)$, where $E_1$ and $E_2$ are the respective transmission powers of the two schemes. When $M$ is large and the stack length is $L= \textsf{O}(M)$, we have:
\begin{align}\label{eq35}
{\rm snr}_{2} - {\rm snr}_{1} \textcolor{black}{=} \textsf{O}\big(\log_{10}(M)\big).
\end{align}
\end{theorem}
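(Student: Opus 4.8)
The plan is to reduce the claimed SNR gap to a direct comparison of the two per-component MSE bounds already established: the rank-1 subspace bound $\sigma_n^2/(MB\sigma_x^2)$ from (\ref{eq31}) and the linear MMSE bound $\sigma_n^2/\big(B\sigma_x^2 + \rho_h^2\sigma_n^2\big)$ with $\rho_h^2 = P+1$ from (\ref{eq34}). Since the target normalized MSE $\gamma$ and the channel energy $\|\bm{H}\|_F^2$ are identical for both schemes, demanding that the two estimators reach the same $\gamma$ is equivalent to forcing their per-component MSE bounds to equal a common threshold $\tau$ proportional to $\gamma$. The transmission power of each scheme enters only through the pilot variance, so I would substitute $E_1$ and $E_2$ (up to the fixed factor $B$) for $\sigma_x^2$ in the two bounds and solve each for the working SNR.

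First I would impose the target on the rank-1 method: setting $\sigma_n^2/(MBE_1)=\tau$ yields $E_1/\sigma_n^2 = 1/(MB\tau)$. Next I would impose the same target on the linear MMSE estimator: setting $\sigma_n^2/\big(BE_2 + (P+1)\sigma_n^2\big)=\tau$ yields $E_2/\sigma_n^2 = \big(1-(P+1)\tau\big)/(B\tau)$. Then I would form the SNR difference in the logarithmic domain,
\begin{align}
{\rm snr}_2 - {\rm snr}_1 &= 10\log_{10}\!\left(\frac{E_2/\sigma_n^2}{E_1/\sigma_n^2}\right) = 10\log_{10}\!\big[M\big(1-(P+1)\tau\big)\big] \nonumber \\
&= 10\log_{10}(M) + 10\log_{10}\!\big(1-(P+1)\tau\big),
\end{align}
where the cross-cancellation of $B$ and $\tau$ leaves $M$ as the only quantity that scales. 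Because the operating regime fixes a small target (e.g. $\gamma<10^{-2}$) and the S-V model has $P\le 5$, the correction factor $1-(P+1)\tau$ is a constant bounded strictly inside $(0,1)$, so the second term is $\textsf{O}(1)$. This gives ${\rm snr}_2 - {\rm snr}_1 = 10\log_{10}(M) + \textsf{O}(1) = \textsf{O}\big(\log_{10}(M)\big)$, as claimed.

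The main obstacle I anticipate is not the algebra but justifying the reduction carried out in the first paragraph: I must argue that comparing the two schemes at a common $\gamma$ legitimately reduces to equating the two per-component bounds, i.e. that the normalization $\|\bm{H}\|_F^2$ and the count of summed components (the $KP$ angular-domain terms for the rank-1 estimator versus the components seen by the MMSE estimator) either cancel or contribute only bounded multiplicative factors. I would handle this by summing (\ref{eq31}) and (\ref{eq34}) over $p$ and $k$, dividing each total by the shared $\|\bm{H}\|_F^2$, and observing that any such $\textsf{O}(1)$ factors shift only the additive constant and never the dominant $10\log_{10}(M)$ scaling. A secondary point to verify is that $\tau$ remains small enough for $1-(P+1)\tau$ to stay positive, which is guaranteed by the assumed small-target regime and keeps the logarithm well defined.
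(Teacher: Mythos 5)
Your proposal is correct and takes essentially the same route as the paper: both equate the per-component bounds (\ref{eq31}) and (\ref{eq34}) at a common target and take the ratio in the logarithmic domain, obtaining $10\log_{10}(M)$ plus a bounded correction, where your factor $1-(P+1)\tau$ is exactly the reciprocal of the paper's factor $1+\rho_h^2\sigma_n^2/(B\sigma_x^2)$ evaluated at the MMSE operating point. If anything, you are somewhat more careful than the paper's one-line argument, since you make explicit the reduction from the normalized MSE $\gamma$ to the per-component thresholds and the positivity condition $1-(P+1)\tau>0$, which the paper states only loosely as ``when $\sigma_n^2/\sigma_x^2$ is small.''
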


The proof of Theorem~\ref{T4} is straightforward.
\textcolor{black}{From (31), we have ${\rm snr}_1\triangleq 10\log_{10} (E_1 /\sigma_n^2) =  10\log_{10}(\frac{\sigma_n^2}{MB\sigma_x^2}) $.
While for the linear MMSE method, we have ${\rm snr}_2\triangleq 10\log_{10} (E_2 /\sigma_n^2) =  10\log_{10} (\frac{\sigma_n^2}{B\sigma_x^2 + \rho_h^2 \sigma_n^2})$.
Thus, we obtain ${\rm snr}_{2} - {\rm snr}_{1} =  10\log_{10} \left[\frac{M}{1 + \rho_h^2/B \times \sigma_n^2/ \sigma_x^2}\right]  = \textsf{O}\big(\log_{10}(M)\big)$, when $\sigma_n^2/\sigma_x^2$ is small.}

\begin{remark}\label{R1}
Theoretical analysis shows that our new estimator is capable of acquiring the high-resolution AoA for large $M$ and $L\thicksim \textsf{O}(M)$. When $M$ is small, this result may be invalid. Later in the numerical simulation, it will be shown that the above theoretical result holds when $M>M_0$ with $M_0=64$. Although our rank-1 subspace estimator achieves excellent estimation performance for massive MIMO systems, its computational complexity on the order of $\textsf{O}(KM^3)$ still makes it impractical. Note that the same problem is also applied to the linear  MMSE estimator\footnote{For the correlated MIMO channel matrix, the computational complexity of the linear MMSE estimator becomes $\textbf{O}\big((MB)^3\big)$ \cite{shariati2014low,Li2020Randomized}.}, which has the complexity of $\textsf{O}(M^3)$. It is critical to achieve a low-complexity deployment for massive MIMO system, while maintaining the high performance.
\end{remark}

\section{Fast Rank-1 Subspace Estimator}\label{S5}
In this section, we design a fast implementation of our rank-1 subspace method.
From Subsection~\ref{S3.3}, the complexity of this rank-1 subspace estimator mainly comes from the SVD of the Hankel matrix $\bar{\bm{Y}}$. Without loss of generality, in the sequel we assume $L=M/2$. Therefore, from (\ref{eq:rank_limited}), the space-embedding Hankel matrix $\bar{\bm{Y}}\in \mathbb{C}^{L\times L}$ becomes a symmetric positive semi-definite (SPSD) matrix, i.e., $\bar{\bm{Y}}=\bar{\bm{Y}}^{\rm T}$. As discussed previously, the rank of this SPSD matrix $\bar{\bm{Y}}$ is restricted. By exploiting this low-rank property, we resort to a randomized low-rank approximation technique that is widely applied in linear algebra and scientific computing \cite{woodruff2014sketching,wang2016spsd,drineas2005on,Li2020Randomized,2021Random,2023Machine}, to construct a fast rank-1 subspace CSI estimator without impeding the estimation accuracy.
It is noteworthy that, although this fast method may substantially reduce the computational complexity in the case of low-rank channel matrix, it is not an essential component of our two-stage CSI estimator.
For example, for the full-rank channel matrix, there is no need to apply it.

\vspace{-0.3cm}
\subsection{Randomized Low-rank Approximation}\label{S5.1}
In order to minimize the overall computational complexity, we focus on \emph{approximate} rather than \emph{exact} computation of the SVD of $\bar{\bm{Y}}\in \mathbb{C}^{L \times L}$. To achieve this goal, we first approximate $\bar{\bm{Y}}$ with a special structure \cite{Li2019Fast}:
\begin{align}\label{eq:Y_appro} 
\bar{\bm{Y}} \approx &  \bm{C} \bm{W} \bm{C}^{\rm H} ,
\end{align}
where the small matrix sketch $\bm{C}\in \mathbb{C}^{L \times s}$ is abstracted from $\bar{\bm{Y}}$ via random sampling\footnote{Note that, in the general case ($L\neq M/2$), the Hankel matrix $\bar{\bm{Y}}$, constructed from the received signal vector $\y$, is not symmetric. For simplicity, here we assume $L = M/2$.}, while the small matrix $\bm{W}\in \mathbb{C}^{s \times s}$ is computed based on the chosen sketch $\bm{C}$.

To abstract $\bm{C}$ from $\bar{\bm{Y}}$ via random sampling, we specify the sampling length $s$ that satisfy $\text{rank}\big(\bar{\bm{Y}}\big)\leq s\ll M$, and randomly pick up the $s$ columns of $\bar{\bm{Y}}$ according to the uniform distribution $\mathcal{U}\{1,2, \cdots ,L\}$. Denoting the sampled column indexes as the set $\mathcal{I}\triangleq\{i_1,i_2,\cdots, i_s\}$ with $|\mathcal{I}|=s$, we have
\begin{align}\label{eq37}
\bm{C} =& \bar{\bm{Y}}(:,\mathcal{I}) = \bar{\bm{Y}} \bm{\Pi} ,
\end{align}
where $\bm{\Pi}\in \mathbb{R}^{L \times s}$ is the uniform sampling matrix, which selects the $s$ columns of $\frac{\sqrt{L}}{\sqrt{s}}\bm{I}_L$ with the column index set $\mathcal{I}$, namely, $\bm{\Pi}=\frac{\sqrt{L}}{\sqrt{s}}\bm{I}_L(:,\mathcal{I})$.

Given the small sketch $\bm{C}\in \mathbb{C}^{L \times s}$, the weight matrix $\bm{W}\in \mathbb{C}^{s \times s}$ is determined, by minimizing the approximation error
\begin{align}\label{eq38}
\bm{W} =& \arg \min_{\bm{W}' \in \mathbb{C}^{s \times s}} \big\|\bar{\bm{Y}} - \bm{C} \bm{W}' \bm{C}^{\rm H}\big\|_F^2 = \bm{C}^\dag \bar{\bm{Y}} \big(\bm{C}^\dag\big)^{\rm H}.
\end{align}
Considering that the above exact solution requires the computational complexity of $\textsf{O}\big(2 s^2 M + s M^2\big)$, we choose to solve it approximately. Specifically, since the optimization problem (\ref{eq38}) is \emph{over-determined}, $\bm{W}$ can be derived in a more efficient way, by further sketching both $\bar{\bm{Y}}$ and $\bm{C} \bm{W} \bm{C}^{\rm H}$, as is done by the Nystr$\ddot{\text{o}}$m method \cite{drineas2005on,Li2019Fast,gittens2016revisiting}. Hence, we have
\begin{align}\label{eq39}
\bm{W} =& \arg \min_{\bm{W}' \in \mathbb{C}^{s \times s}} \big\|\bm{\Pi}^{\rm T}\big(\bar{\bm{Y}} - \bm{C} \bm{W}' \bm{C}^{\rm H}\big)\bm{\Pi} \big\|_F^2, \nonumber \\
 =& \big(\bm{\Pi}^{\rm T}\bar{\bm{Y}} \bm{\Pi}\big)^\dag = \big(\bar{\bm{Y}}(\mathcal{I},\mathcal{I})\big)^\dag.
\end{align}

With $\bm{C}$ and $\bm{W}$, we obtain the randomized low-rank approximation $\bar{\bm{Y}}$ as in (\ref{eq:Y_appro}). To compute its SVD, we first obtain the SVD of $\bm{C}$ as $\bm{C}=\bm{U}_c\bm{\Sigma}_c\bm{V}_c^{\rm H}$, and then approximate $\bar{\bm{Y}}$ by:
\begin{align}\label{eq40}
\bar{\bm{Y}} \approx \bm{U}_c \underbrace{\bm{\Sigma}_c \bm{V}_c^{\rm H} \bm{W} \bm{V}_c \bm{\Sigma}_c^{\rm H}}_{\triangleq\bm{B}} \bm{U}_c^{\rm H} = \bm{U}_c\bm{B} \bm{U}_c^{\rm H}.
\end{align}
As $\bm{B}$ is a Hermitian matrix, i.e., $\bm{B}=\bm{B}^{\rm H}$, it has the SVD of $\bm{B}=\bm{U}_B\bm{\Sigma}_B\bm{U}_B^{\rm H}$. Substituting the SVD of $\bm{B}$ into (\ref{eq40}) leads to the approximate rank-restricted SVD of $\bar{\bm{Y}}$ as
\begin{align}\label{eq41}
\bar{\bm{Y}} \approx & \bm{U}_c \bm{U}_B \bm{\Sigma}_B \bm{U}_B^{\rm T} \bm{U}_c^{\rm H} = \widetilde{\bm{U}}_P \bm{\Sigma}_B \widetilde{\bm{U}}_P^{\rm H} ,
\end{align}
where the unitary matrix $\widetilde{\bm{U}}_P \in \mathbb{C}^{L \times s}$, given by
\begin{equation}\label{eq:subspace_appro} 
\widetilde{\bm{U}}_P\triangleq\bm{U}_c\bm{U}_B ,
\end{equation}
defines approximately the signal subspace. On this basis, we can approximate the pseudo-spectrum (\ref{eq12}) by
\begin{align}\label{eq:spectrum_appro} 
\widetilde{P}(\theta) =& \frac{1}{\bm{a}_L^{\rm H}(\theta)\big(\bm{I}_{L}-\widetilde{\bm{U}}_{P}\widetilde{\bm{U}}_{P}^{\rm H}\big)\bm{a}_L(\theta)} ,
\end{align}
and search for its $P$ highest peaks on the $N$ gridded angles $\{-\pi+n \Delta\theta,~n=0,1,\cdots,N-1\}$.
In this stage, other subspace-based methods, e.g., ESPRIT, can also be applied to compute the above pseudo-spectrum and thus acquire unknown AoAs.
When the exact number of $P$ is known as {\it a priori}, highly accurate AoAs can be attained with ESPRIT.
While this prior information is not known, the above MUSIC method can still acquire the accurate result.

\vspace{-0.3cm}
\subsection{Complexity Analysis}\label{S5.2}
The computational complexity of our fast rank-1 subspace estimator for the single UE involves three parts: 1)~compute the SVD of $\bar{\bm{Y}}$ approximately, which has the complexity of $\textsf{O}\big(s^2M+s^3\big)$, 2)~evaluate the spatial pseudo-spectrum, which has the complexity of $\textsf{O}(NLP)$, and 3)~estimate the channel gains, with the complexity of $\textsf{O}(M)$. Since both the number of users $K$ and the number of paths $P$ are very small ($\ll M$) and further considering that $L=\frac{M}{2}$ and in practice the number of sampled columns $s=\textsf{O}(P)$, the overall computational complexity of our fast method for the $K$-user case is
\begin{align}\label{eq44}
C_{\text{fast rank-1}} \textcolor{black}{=} \textsf{O}\big( K (P^2+NP) M\big) .
\end{align}
Hence, the complexity of our fast method scales linearly with $M$. This is two orders of magnitude lower than that of the linear MMSE method, which has the complexity of $\textsf{O}(M^3)$.
\textcolor{black}{This is because in practice we usually have $P<6, M>64, K\ll M$ in the multi-users massive MIMO scenario \cite{akdeniz2014millimeter}, and hence $ K (P^2+NP) M \ll M^3$. }

\begin{table}[t!]
\footnotesize
\vspace*{-1mm}
\caption{Computational Complexity of Various CSI Estimators.}
\label{tab1}
\vspace{-4mm}
\begin{center}
\begin{tabular}{ll}
\hline
\multicolumn{1}{c}{Estimator} & \multicolumn{1}{c}{Computational complexity} \\ \hline
LS\cite{marzetta2010noncooperative} & $\textsf{O}\big(K B M + K B^2\big)$ \\
SVP\cite{shen2015joint}$^1$ & $\textsf{O}\big(t_{svp}r_{est}(K^2 +K) M +2K^2\big)$ \\
FFT\cite{fan2017angle} & $\textsf{O}\big( K( B + \log_2(M))M + K P^2\big)$ \\
WPEACH\cite{shariati2014low}$^2$ & $\textsf{O}\big( N_l( B K^2 + B^2 K)M^3 + N_l^3 \big)$ \\
Linear MMSE\cite{shariati2014robust}$^3$ & $\textsf{O}\big( B( 2 K^2 + 2B K + B^2 )M^3 \big)$           \\
LSE-SMP\cite{huang2019iterative}$^4$ & $\textsf{O}\big( t_{lse-smp} K (B^2 + K^2) M^3 \big)$ \\
Random-MMSE\cite{Li2020Randomized}$^5$ & $\textsf{O}\big( B_1^3 M^3 + s_r^3 B_2^2 + s_r^2s_c \big)$   \\
Proposed$^6$ & $\textsf{O}\big(K M^3 + K N P M\big)$ \\
Proposed fast$^6$ & $\textsf{O}\big( K (P^2+NP) M\big)$ \\ \hline
\end{tabular}
\end{center}
\vspace*{-2mm}
\begin{tablenotes}
\footnotesize
\item[1] 1. $t_{svp}$: the number of iterations, and $r_{est}$: the order of low rank.
\item[2] 2. $N_l$: truncation order of Taylor series expansion.
\item[3] 3. $\textsf{O}\big(M^3 \big)$ for uncorrelated MIMO channels.
\item[4] 4. $t_{lse-smp}$: the number of iterations.
\item[5] 5. $B_1/B_2$: the pilot lengths, and $s_r / s_c$: the sampling lengths.
\item[6] 6. Complexity $\textsf{O}(K N P M)$ for matrix multiplications has fast parallel acceleration implementation \cite{Li2021}.
\end{tablenotes}
\vspace*{-5mm}
\end{table}

Table~\ref{tab1} compares the computational complexity of the following massive MIMO channel estimators: the linear  LS \cite{marzetta2010noncooperative}, the SVP \cite{shen2015joint}, the FFT \cite{fan2017angle}, the WPEACH \cite{shariati2014low}, the linear MMSE \cite{shariati2014robust}, the LSE-SMP \cite{huang2019iterative}, the random-MMSE \cite{Li2020Randomized}, our rank-1 subspace estimator, and our fast rank-1 subspace estimator. We will show that our low-complexity fast rank-1 subspace estimator is capable of outperforming the linear MMSE estimator through both theoretical analysis and numerical simulation.

\vspace{-0.4cm}
\subsection{Theoretical Analysis on Approximated Pseudo-spectrum}\label{S5.3}
We first give the error bound on the approximate pseudo-spectrum $\widetilde{P}(\theta)$ (\ref{eq:spectrum_appro}) computed based on the subspace $\widetilde{\bm{U}}_P$.
To be self-contained, here we provide the following Theorem 5 to demonstrate the accuracy of the fast method, while the proof of Theorem 5 can be found in our previous work \cite{Li2019Fast}.
	
\begin{theorem} \label{T5}
Give $\bm{C}=\bar{\bm{Y}}\bm{\Pi}$ and $\bm{W}=\big(\bar{\bm{Y}}(\mathcal{I},\mathcal{I})\big)^\dag$, where $\bm{\Pi}\in \mathbb{R}^{L\times s}$ is a uniform sampling matrix. Let the signal subspace $\widetilde{\bm{U}}_P$ be computed by (\ref{eq:subspace_appro}) and the pseudo-spectrum be approximated by (\ref{eq:spectrum_appro}). Define the coherence of $\widetilde{\bm{U}}_P$ as $\mu\big(\widetilde{\bm{U}}_P\big)\triangleq \frac{L}{P}\max\limits_{1\le m\le L} \big\|\widetilde{\bm{U}}_P(m,:)\big\|_2^2 \in \left[1, ~ \frac{L}{P}\right]$ and give a constant $\delta \in (0,~ 1)$. When the sampling length $s$ satisfies
\begin{align}\label{eq:sampling_len} 
s \geq & 4.5 P \log_2 \left(\tfrac{P}{\delta} \right) \mu\big(\widetilde{\bm{U}}_P\big),
\end{align}
the following relation holds with probability at least $1-\delta$
\begin{align}\label{eq:lower_bound} 
\sqrt{P(\theta)} \leq & \left( 1 + \tfrac{L}{\sqrt{s}} \tfrac{\sigma_{P+1}\big(\bar{\bm{Y}}\big)}{\sigma_{P}\big(\bar{\bm{Y}}\big)} \right)  \sqrt{\widetilde{P}(\theta)} .
\end{align}
\end{theorem}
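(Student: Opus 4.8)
The plan is to recast the claimed bound on the two pseudo-spectra as a subspace-perturbation bound and then to control that perturbation with the randomized low-rank (Nystr\"{o}m) sampling analysis. First I would note that the steering vector in (\ref{eq3}) has unit-modulus entries, so $\|\bm{a}_L(\theta)\|_2=\sqrt{L}$, and that the orthogonal projectors $\bm{U}_P\bm{U}_P^{\rm H}$ and $\widetilde{\bm{U}}_P\widetilde{\bm{U}}_P^{\rm H}$ are Hermitian and idempotent, so that (\ref{eq12}) and (\ref{eq:spectrum_appro}) can be written as squared residual norms,
\begin{align*}
\tfrac{1}{P(\theta)} &= \big\|\big(\bm{I}_L-\bm{U}_P\bm{U}_P^{\rm H}\big)\bm{a}_L(\theta)\big\|_2^2 , \\
\tfrac{1}{\widetilde{P}(\theta)} &= \big\|\big(\bm{I}_L-\widetilde{\bm{U}}_P\widetilde{\bm{U}}_P^{\rm H}\big)\bm{a}_L(\theta)\big\|_2^2 .
\end{align*}
The target inequality is then equivalent to the residual comparison $\|(\bm{I}_L-\widetilde{\bm{U}}_P\widetilde{\bm{U}}_P^{\rm H})\bm{a}_L(\theta)\|_2\le(1+\epsilon)\|(\bm{I}_L-\bm{U}_P\bm{U}_P^{\rm H})\bm{a}_L(\theta)\|_2$, with $\epsilon=\frac{L}{\sqrt{s}}\frac{\sigma_{P+1}(\bar{\bm{Y}})}{\sigma_P(\bar{\bm{Y}})}$, which I would establish uniformly in $\theta$.

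Second, I would split $\bm{a}_L$ along the exact signal/noise decomposition and use $\|\bm{I}_L-\widetilde{\bm{U}}_P\widetilde{\bm{U}}_P^{\rm H}\|_2=1$ to obtain
\begin{align*}
\big\|(\bm{I}_L-\widetilde{\bm{U}}_P\widetilde{\bm{U}}_P^{\rm H})\bm{a}_L\big\|_2 \le \big\|(\bm{I}_L-\bm{U}_P\bm{U}_P^{\rm H})\bm{a}_L\big\|_2 + \big\|(\bm{I}_L-\widetilde{\bm{U}}_P\widetilde{\bm{U}}_P^{\rm H})\bm{U}_P\big\|_2\,\big\|\bm{U}_P^{\rm H}\bm{a}_L\big\|_2 .
\end{align*}
Since $\|\bm{U}_P^{\rm H}\bm{a}_L\|_2\le\|\bm{a}_L\|_2=\sqrt{L}$, it therefore suffices to prove the \emph{subspace-leakage} bound $\|(\bm{I}_L-\widetilde{\bm{U}}_P\widetilde{\bm{U}}_P^{\rm H})\bm{U}_P\|_2\le\frac{\sqrt{L}}{\sqrt{s}}\frac{\sigma_{P+1}(\bar{\bm{Y}})}{\sigma_P(\bar{\bm{Y}})}$, i.e.\ to showing that the sampled subspace $\mathrm{col}(\bm{C})$ almost contains the exact signal subspace; the bookkeeping that converts this additive estimate into the stated multiplicative factor is then routine.

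Third, I would establish the leakage bound through the sampling structure $\bm{C}=\bar{\bm{Y}}\bm{\Pi}$, $\bm{W}=(\bar{\bm{Y}}(\mathcal{I},\mathcal{I}))^\dag$. Substituting the SVD of $\bar{\bm{Y}}$ from (\ref{eq11}) (here $L=M/2$, so $\bar{\bm{Y}}$ is SPSD and $\bm{U}=\bm{V}^{*}$), the range residual is controlled by the classical randomized-range-finder estimate $\|(\bm{I}_L-\widetilde{\bm{U}}_P\widetilde{\bm{U}}_P^{\rm H})\bm{U}_P\|_2\le\frac{\sigma_{P+1}(\bar{\bm{Y}})}{\sigma_P(\bar{\bm{Y}})}\,\|\bm{V}_{-P}^{\rm H}\bm{\Pi}\|_2\,\|(\bm{V}_P^{\rm H}\bm{\Pi})^\dag\|_2$, in which the singular-value blocks $\bm{\Sigma}_n,\bm{\Sigma}_P$ contribute the ratio $\frac{\sigma_{P+1}}{\sigma_P}$, the factor $\|\bm{V}_{-P}^{\rm H}\bm{\Pi}\|_2$ supplies the $\frac{\sqrt{L}}{\sqrt{s}}$ scaling induced by the normalization of $\bm{\Pi}$ in (\ref{eq37}), and $\|(\bm{V}_P^{\rm H}\bm{\Pi})^\dag\|_2=\textsf{O}(1)$ provided $\bm{V}_P^{\rm H}\bm{\Pi}$ is well-conditioned. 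The crux is exactly this last point: under the coherence-based sample size (\ref{eq:sampling_len}), the sampled Gram matrix $\bm{V}_P^{\rm H}\bm{\Pi}\bm{\Pi}^{\rm H}\bm{V}_P$ concentrates around $\bm{I}_P$ with probability at least $1-\delta$, which I would prove by a matrix Chernoff/Bernstein inequality whose per-column variance is bounded by the coherence $\mu(\widetilde{\bm{U}}_P)$.

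I expect this matrix-concentration step to be the main obstacle, since it is what turns \emph{uniform} (rather than leverage-score) column sampling into a guarantee, and it is the sole origin of both the coherence condition (\ref{eq:sampling_len}) and the failure probability $\delta$. The delicate part is tracking $\mu(\widetilde{\bm{U}}_P)$ together with the $\frac{\sqrt{L}}{\sqrt{s}}$ normalization of $\bm{\Pi}$ through the pseudo-inverse $(\bm{V}_P^{\rm H}\bm{\Pi})^\dag$, so that the assembled constant emerges as $\frac{L}{\sqrt{s}}\frac{\sigma_{P+1}(\bar{\bm{Y}})}{\sigma_P(\bar{\bm{Y}})}$ rather than with spurious extra powers of $L$ or $P$; once this estimate is in hand, the projector identities, the triangle inequality, and the combination with $\|\bm{a}_L\|_2=\sqrt{L}$ are all routine.
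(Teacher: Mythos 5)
The paper itself contains no proof of Theorem~\ref{T5} (it is deferred to the prior work \cite{Li2019Fast}), so your proposal must be judged against the statement directly, and there is a genuine gap in it. Your Step~1 reformulation is correct, and your Step~3 machinery is the right one: the structural randomized-range-finder bound together with a matrix Chernoff argument under coherence-based uniform sampling is exactly what produces the sample-size condition (\ref{eq:sampling_len}) and the failure probability $\delta$, and it assembles the scaling $\frac{L}{\sqrt{s}}\frac{\sigma_{P+1}(\bar{\bm{Y}})}{\sigma_{P}(\bar{\bm{Y}})}$ correctly. The problem is your claim that converting the additive estimate of Step~2 into the multiplicative factor of (\ref{eq:lower_bound}) is ``routine bookkeeping.'' It is not, and the conversion fails precisely where the theorem has content. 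Write $r(\theta)=\big\|\big(\bm{I}_L-\bm{U}_P\bm{U}_P^{\rm H}\big)\bm{a}_L(\theta)\big\|_2=1/\sqrt{P(\theta)}$ and $\widetilde{r}(\theta)=1/\sqrt{\widetilde{P}(\theta)}$. Your triangle inequality yields the $\theta$-independent additive bound $\widetilde{r}(\theta)\le r(\theta)+\epsilon$ with $\epsilon=\frac{L}{\sqrt{s}}\frac{\sigma_{P+1}(\bar{\bm{Y}})}{\sigma_{P}(\bar{\bm{Y}})}$, which upon inversion gives
\begin{equation*}
\sqrt{P(\theta)} \;\le\; \Big(1+\epsilon\,\sqrt{P(\theta)}\Big)\,\sqrt{\widetilde{P}(\theta)},
\end{equation*}
and this coincides with the claimed $\big(1+\epsilon\big)\sqrt{\widetilde{P}(\theta)}$ only when $\sqrt{P(\theta)}\le 1$, i.e., when the exact residual satisfies $r(\theta)\ge 1$. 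But at a true AoA $\theta_p$ --- exactly the regime in which the theorem is invoked in Remark~\ref{R2}, where $P(\theta_p)/P(\theta_{null})\thicksim L$ --- the exact residual is small (it vanishes as $\sigma_{P+1}(\bar{\bm{Y}})\to 0$), so $r(\theta_p)+\epsilon \gg (1+\epsilon)\,r(\theta_p)$ and your bound is strictly weaker than (\ref{eq:lower_bound}) at the peaks, i.e., it cannot rule out that the approximate spectrum misses an AoA.

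The failure is structural: your additive term $\sqrt{L}\,\big\|\big(\bm{I}_L-\widetilde{\bm{U}}_P\widetilde{\bm{U}}_P^{\rm H}\big)\bm{U}_P\big\|_2$ does not depend on $\theta$, whereas the right-hand side of the target inequality scales with $r(\theta)$; no uniform additive perturbation estimate can imply a uniform multiplicative one. A proof of (\ref{eq:lower_bound}) must bound the ratio $\widetilde{r}(\theta)/r(\theta)$ directly, exploiting that the signal component of $\bm{a}_L(\theta)$ is reproduced through $\bar{\bm{Y}}$ itself --- for instance via the exact annihilation $\big(\bm{I}_L-\widetilde{\bm{U}}_P\widetilde{\bm{U}}_P^{\rm H}\big)\bar{\bm{Y}}\bm{\Pi}=\bm{0}$ when $\widetilde{\bm{U}}_P$ spans $\mathrm{col}(\bm{C})$, so that the leakage of $\mathrm{col}(\bm{U}_P)$ out of $\mathrm{col}(\bm{C})$ is measured \emph{relative} to the noise-level quantity that also lower-bounds $r(\theta)$, or via a $\sin\Theta$-type relative perturbation argument. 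Two smaller issues should also be repaired: (i) if $\widetilde{\bm{U}}_P=\bm{U}_c\bm{U}_B$ is kept as the full $L\times s$ factor of (\ref{eq:subspace_appro}), then $\widetilde{\bm{U}}_P\widetilde{\bm{U}}_P^{\rm H}$ is the projector onto $\mathrm{col}(\bm{C})$ and the range-finder bound applies directly, but if it is truncated to the top $P$ columns (as the normalization by $P$ in the coherence definition suggests) an extra truncation step is required; (ii) the estimate $\big\|\big(\bm{V}_P^{\rm H}\bm{\Pi}\big)^\dag\big\|_2=\textsf{O}(1)$ delivered by matrix Chernoff carries a constant (e.g., $\sqrt{2}$) that must be absorbed somewhere, since (\ref{eq:lower_bound}) is stated with no slack constant in front of $\frac{L}{\sqrt{s}}\frac{\sigma_{P+1}(\bar{\bm{Y}})}{\sigma_{P}(\bar{\bm{Y}})}$.
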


Theorem~\ref{T5} gives the lower error bound on the approximate pseudo-spectrum.
When the SNR is high, the spectrum gap $\tfrac{\sigma_{P+1}\big(\bar{\bm{Y}})}{\sigma_{P}\big(\bar{\bm{Y}}\big)}$ is very small, i.e., $\tfrac{\sigma_{P+1}\big(\bar{\bm{Y}})}{\sigma_{P}\big(\bar{\bm{Y}}\big)}\rightarrow 0$, and the approximation error becomes ignorable.

\begin{remark}\label{R2}
Based on the above lower bound, we can show that the approximated pseudo-spectrum should not miss the true AoAs. On one hand, the exact pseudo-spectrum peaks at a true AoA and is significantly larger than the noise baseline at the non-AoA region. Specifically, $P(\theta_p)/P(\theta_{null})\thicksim L$, where $\theta_{null}\in [0, ~ \pi]/\{\theta_p\}_{p=1}^P$. If a single AoA was missed in the approximated pseudo-spectrum, we expect to have $\widetilde{P}(\theta_p)\thicksim P(\theta_{null})$. On the other hand, Theorem~\ref{T5} suggests that $P(\theta_p)/\widetilde{P}(\theta_p) \leq (1+\epsilon)^2$, where $\epsilon\triangleq \tfrac{L}{\sqrt{s}} \tfrac{\sigma_{P+1}\big(\bar{\bm{Y}}\big)}{\sigma_{P}\big(\bar{\bm{Y}}\big)}$. Due to the low-rank property of Hankel matrix $\bar{\bm{Y}}$, in the high SNR region, we should have $\epsilon < 1$. Combining the both arguments results in the contradictory result of $(1+\epsilon)^2 \geq L$. This indicates that $\widetilde{P}(\theta_p)\thicksim P(\theta_{null})$ should be untrue, in the case of large $M$. Thus, the approximated pseudo-spectrum should accurately capture the real AoAs.
\end{remark}

Hence, like the exact pseudo-spectrum (\ref{eq12}), the approximated pseudo-spectrum (\ref{eq:spectrum_appro}) can be also used to acquire the accurately AoA information. Furthermore, the rest of the fast rank-1 subspace estimator are identical to those of the rank-1 subspace estimator. Therefore, we may conclude that the fast rank-1 subspace estimator should not degrade the achievable performance of the rank-1 subspace estimator.


\section{Numerical Analysis}\label{S6}
In this section, we compare the performance and complexity of various massive MIMO channel estimators based on numerical simulations using several evaluation metrics, including the normalized mean square error (NMSE), the averaged CPU runtime, and the channel capacity.

\begin{figure}[!t]
\vspace{-4mm}
\centering
\includegraphics[width=6.2cm]{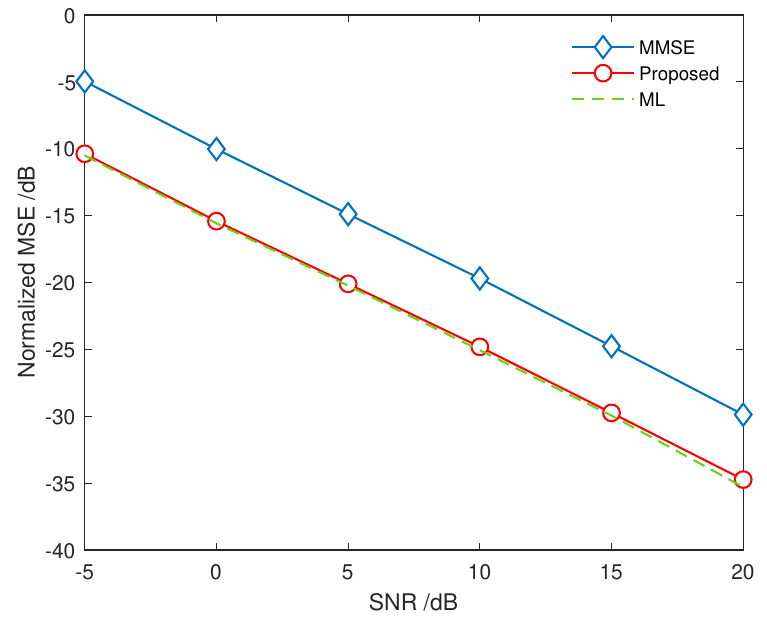}
\vspace{-4mm}
\caption{Normalized MSE performance of our rank-1 subspace estimator and the linear MMSE estimator as functions of SNR for the full-rank, sparse and uncorrelated channel, given $M=128$, $K=40$, $B=2K$ and $L=M/2$.}
\label{fig:1}
\vspace{-1mm}
\end{figure}

\vspace{-0.4cm}
\subsection{Full-rank and Sparse Channel}\label{S6.1}
We first compare our rank-1 subspace estimator with the classical linear MMSE method, in the case of full-rank and sparse channel matrix, i.e., $\text{rank}(\bm{H})=\min\{M,K\}=K$ and $P\ll M$. That means, the channels are uncorrelated.
In the simulation, we first randomly generate $P$ independent AoAs of each channel vector and $P$ i.i.d. Rayleigh fading gains.
With the assumption of ULA, these spatial channel parameters can be transformed to the temporal domain, and then the channel matrix is obtained.
As discussed, the linear MMSE method requires the channel covariance $\bm{R}_{\bm{H}}$. We assume that such {\it a priori} statistical information is available, thus leading to the \emph{genie-aided} linear MMSE estimator \cite{balevi2020massive}. This may be impractical as acquiring an accurate estimation of the channel covariance requires considerable time resources. By contrast, our method excludes such an impractical prior knowledge.

\begin{figure}[!t]
\centering
\includegraphics[width=6cm]{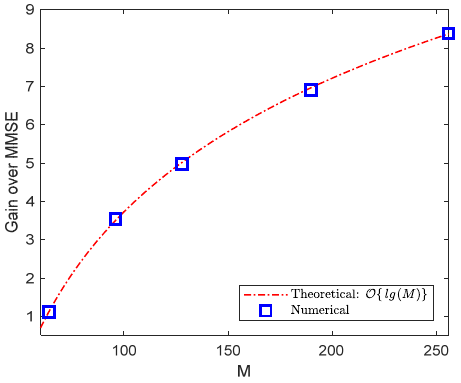}
\vspace{-4mm}
\caption{SNR gain of our rank-1 subspace method over the classical linear MMSE estimator as function of $M$ for the full-rank sparse channel of Fig.~\ref{fig:1}, with $K=40$, $B=2K$ and $L=M/2$.}
\vspace{-5mm}
\label{fig:2}
\end{figure}

Fig.~\ref{fig:1} compares the NMSE performance of our rank-1 subspace estimator with that of the classical linear MMSE estimator and the ML estimator, for the massive MIMO system with the number of antennas $M=128$, the number of users $K=40$ and the length of pilot signal $B=2K$. The number of paths is $P=5\thicksim 7$ for each channel vector, and we set the stack length to $L=M/2$ for our estimator.
Observe from Fig.~\ref{fig:1} that our new method indeed approximates the ML estimator, and dramatically outperforms the linear MMSE estimator. For example, at $\text{SNR}=20$\,dB, the NMSE of our rank-1 subspace estimator is more than 6\,dB lower than that of the linear MMSE method. This significant gain is mainly attributed to two key factors: 1)~the highly accurate AoA information acquired by the rank-1 subspace method, and 2)~the largely improved SNR by the \emph{post-reception} beamforming.

Fig.~\ref{fig:2} depicts the SNR gain achieved by our new method over the linear MMSE estimator under different $M$. First, this SNR gain increases with $M$, since the SNR after the post-reception beamforming increases with $M$. Second, the numerical results show that the attained gain over the linear MMSE method indeed shows a {\it logarithmic} increasing trend with $M$, which validates the theoretical result of Theorem~\ref{T4}. Therefore,
by applying our new estimator, the UEs far away from the BS can be admitted to establish reliable communication links, and the coverage area of massive MIMO system can be effectively extended.

\begin{figure}[!t]
\centering
\includegraphics[width=60mm]{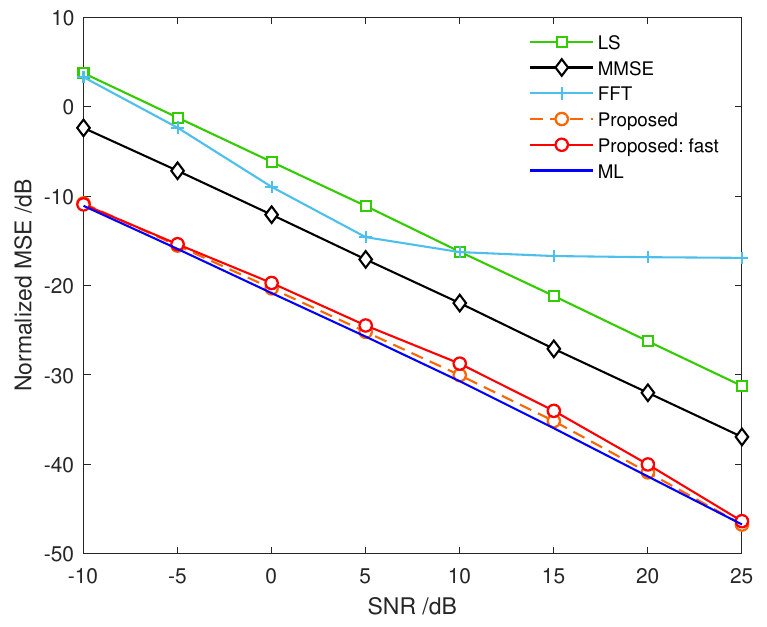}
\vspace{-4mm}
\caption{Normalized MSE performance of five channel estimators as functions of SNR for the full-rank, sparse and uncorrelated channel, given $M=256$, $K=40$, $B=2K$ and $L=M/2$.}
\label{fig:3}
\vspace{-1mm}
\end{figure}

Next we use the same full-rank channel model with $M=256$, $L=128$, $K=40$, $B=2K$ and $P=7$, to compare the NMSE performance of several state-of-the-art channel estimators in Fig.~\ref{fig:3}. Again, the propagation channels of $K$ UEs are uncorrelated. In this case, many low-rank based channel estimators, including the SVP and WPEACH methods, are inapplicable. As expected, the LS estimator and the FFT-based estimator acquire less accurate CSI than the linear MMSE method. As a classical angular-domain estimator \cite{fan2017angle}, the FFT-based method also acquire AoAs first. However, due to its limited resolution and biased channel gain estimation, it exhibits an error floor at the high SNR region, as can be clearly seen in Fig.~\ref{fig:3}. Observe that our fast rank-1 subspace estimator achieves the same NMSE performance as the rank-1 subspace estimator. The numerical result of Fig.~\ref{fig:3} thus validates the theoretical analysis of Subsection~\ref{S5.3}, namely, the approximation of the fast rank-1 subspace implementation does not degrade the achievable performance. Hence, the fast rank-1 subspace estimator significantly outperforms the linear MMSE estimator, in terms of channel estimation accuracy. Note that for the fast rank-1 subspace method, we set the sampling length to $s=\lceil 1.5 \times P\rceil$, i.e., $s \thicksim \textsf{O}(P)$, which is less strict than the requirement of (\ref{eq:sampling_len}); recall that (\ref{eq:sampling_len}) is a sufficient condition. With this setting on $s$, it is sufficient to attain a highly accurate AoA information.

\begin{figure}[!t]
	\begin{center}
		\subfigure[ ]{
			\label{fig:4a}
			\includegraphics[width=6.1cm]{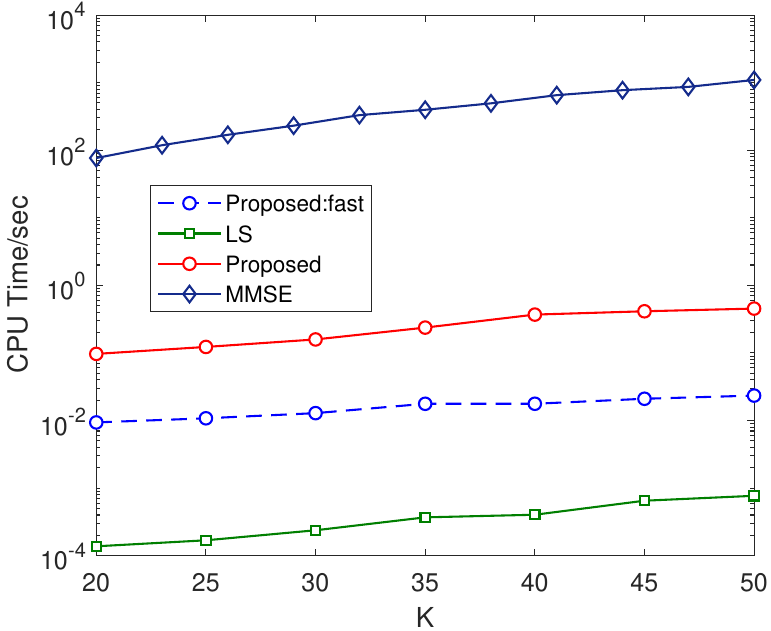}}
		\subfigure[ ]{
			\label{fig:4b}
			\includegraphics[width=6.1cm]{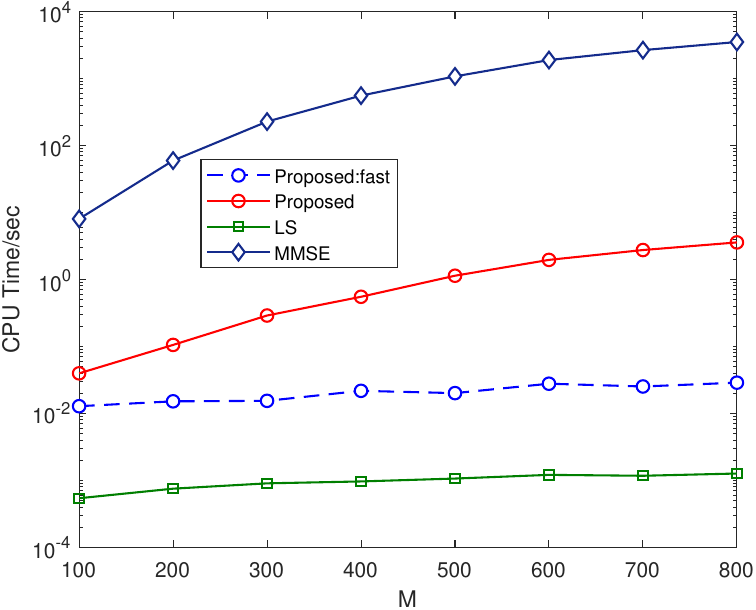}}
	\end{center}
	\vspace*{-5mm}
	\caption{Time complexity of four channel estimators for the full-rank sparse channel of Fig.~\ref{fig:3}: (a)~as functions of $K$, given $M=256$, $B=2K$ and $L=M/2$, and (b)~as functions of $M$, given $K=40$, $B=2K$ and $L=M/2$.}
	\label{fig:4}
	\vspace*{-1mm}
\end{figure}
To evaluate the computational complexity of various channel estimators, {\color{black}except for the comparative results in Table 1, we also} provide the CPU run-time to measure time complexity or processing delay. The operating system runs at 2.7\,GHz basic frequency with RAM 32\,GB. For the same full-rank system of Fig.~\ref{fig:3}, Fig.~\ref{fig:4a} compares the time complexity of four channel estimators by varying the number of users $K$. As expected, the LS estimator imposes the lowest computational complexity, which is its major advantage in practical applications. Observe that our fast rank-1 subspace method incurs a lower complexity compared to the linear MMSE estimator. This is remarkable, considering the fact that it significantly outperforms the linear MMSE method, in terms of channel estimation accuracy. Fig.~\ref{fig:4b} shows the time complexity of the four channel estimators under different $M$. As shown, the complexity of our fast method increases linearly with $M$. This verifies the complexity analysis of Subsection~\ref{S4.2}. It can also be seen that for $M>200$, the complexity of our fast estimator is dramatically lower than the linear MMSE estimator. In particular, for $M=800$, its complexity is around $10^6$ times lower than the linear MMSE method. This agrees with the analysis of Subsection~\ref{S4.2}, which indicates that for massive MIMO with large $M$, the complexity of our fast method is on two orders of magnitude lower than that of the linear MMSE method.

\begin{figure}[!t]
\centering
\includegraphics[width=60mm]{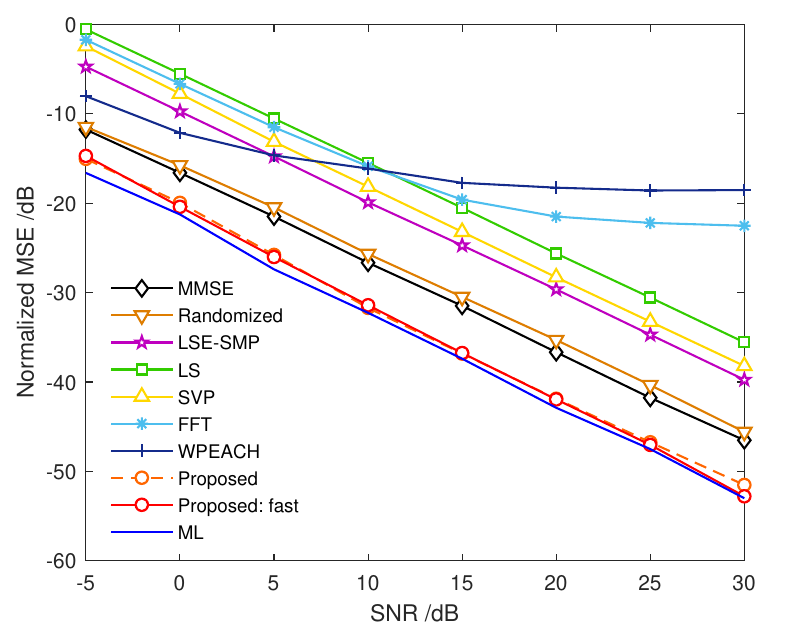}
\vspace{-4mm}
\caption{Normalized MSE performance of various channel estimator for the low-rank, sparse and correlated channel, given $\text{rank}(\bm{H})=20$, $M=256$, $K=40$, $L=M/2$, and $B=K$.}
\vspace*{-1mm}
\label{fig:5}
\end{figure}

\vspace{-0.4cm}
\subsection{Low-rank and Sparse Channel}\label{S6.2}
We further consider the case of low-rank and sparse massive MIMO, i.e., $\text{rank}(\bm{H})<K$ and $P\ll M$.
As reported in \cite{Xie2017An}, multiple users may experience the similar propagation environments, and hence the channel involves correlated paths. In this case, the channel matrix is characterized by both low rank and sparsity. Many channel estimators were developed to exploit these two features \cite{shen2015joint,fan2017angle}.
In the simulation, we firstly generate a set of random AoAs and fading gains, while assuming multiple channel vectors of clustered users share the same set of spatial parameters.
After the spatial channel parameters are transformed to the temporal domain, the low-rank and sparse channel matrix can be obtained.
We assume $P = 5\thicksim7$, $M=256$, $K=40$,  $\text{rank}(\textbf{H})=20$, $L=M/2$, and $B=2K$. For the fast method, the sampling length is $s=20$.

Fig. \ref{fig:5} compares the normalized MSE performance of various channel estimators. Similar to the FFT-based method, the WPEACH also exhibits a high error floor at the high SNR region. This is due to the approximation based on a limited expansion order~(e.g., $3\thicksim5$). The SVP achieves 3dB gain over the LS estimator. The sparsity-based estimator, LSE-SMP, improves the NMSE by 2dB over the SVP, but its NMSE is still 7\,dB higher than the linear MMSE estimator. The randomized-MMSE \cite{ Li2020Randomized}, which relies on randomized matrix approximation to reduce the complexity, is capable of achieving a comparable accuracy as the linear MMSE estimator. Observe that our rank-1 subspace method and its fast version attain the similar performance, {\color{black}both of which approximate the ML estimator and again outperform the linear MMSE estimator considerably.}

\begin{figure}[!t]
\centering
\includegraphics[width=62mm]{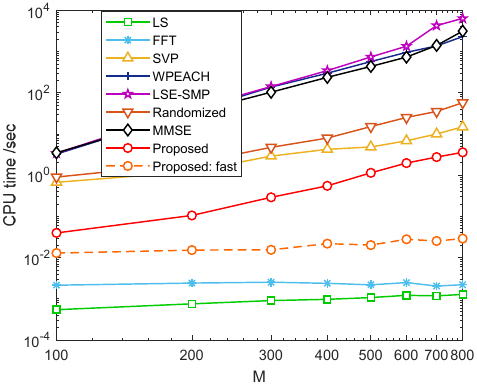}
\caption{Time complexity of various channel estimators as functions of $M$ for the low-rank sparse channel of Fig.~\ref{fig:5}, given $M=256$, $K=40$, $L=M/2$, and $B=K$.}
\label{fig:6}
\vspace*{-4mm}
\end{figure}

\begin{figure}[!t]
	\centering
	\includegraphics[width=6.2cm]{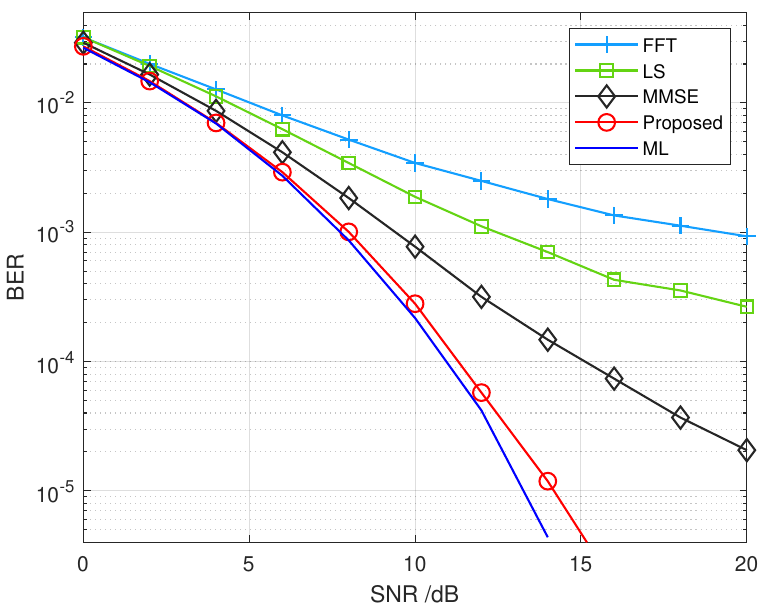}
	\caption{BER performance of various channel estimators in the 5G NR scenario; $M=256$, $K=40$ and $P=7$; $\text{rank}\{\H\}=K$; $B=80$; QPSK.}
	\label{fig:7}
	\vspace*{-1mm}
\end{figure}

Fig. \ref{fig:6} depicts the complexity of these channel estimators as the functions of $M$. The LSE-SMP requires the highest computational complexity, as it involves iterative computation of large matrices. The high complexity of the WPEACH can be attributed to the complex computation of weight coefficients required in approximating inverse via matrix series expansion. Observe that in correlated MIMO, the complexity of the linear MMSE method also increases considerably. By separately acquiring two small sub-matrices \cite{Li2020Randomized}, the randomized-MMSE has significantly lower complexity than the linear MMSE method, but its computational complexity is still considerably higher than our rank-1 subspace method. The LS and FFT methods offer the lowest complexity, but their estimation performance are much poorer compared with the linear MMSE method. Most noticeably, our fast rank-1 subspace method also requires a very low computational complexity. Specifically, at $M=300$, its time complexity is 10000$\times$ faster than the linear MMSE method. Among all the existing massive MIMO channel estimators, our new method is the first one that not only significantly outperforms the popular linear MMSE estimator, but also imposes a dramatically lower complexity than the latter.

\begin{figure}[!t]
\centering
\includegraphics[width=6cm]{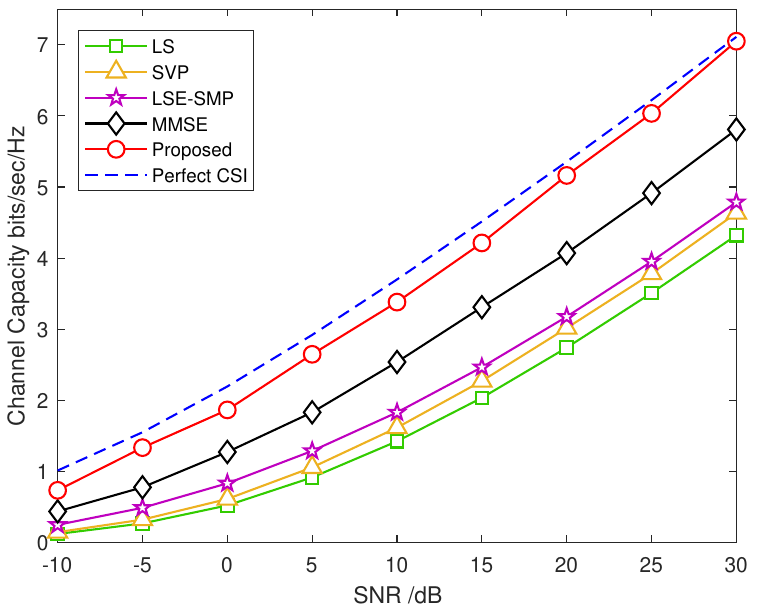}
\caption{Channel capacity achieved by various channel estimators for the MIMO system with $M=256$, $K=40$ and $P=4$.}
 \label{fig:8}
\vspace*{-1mm}
\end{figure}

\vspace{-0.4cm}

\subsection{BER Performance}
We further evaluate the bit error rate (BER) performance of various CSI estimators, in the context of 5G NR configurations.
To be specific, we consider the sparse uncorrelated channel matrix.
The detailed simulation parameters are given as follows: $M$ = 256, $K$ = 40, $B$ = $2K$ and $L$ = $M/2$.
The QPSK modulation is assumed, and the deterministic pilot sequence (i.e., DM-RS) of length $B=80$ is used.
In practice, our new estimator can be applied to both random and deterministic pilot sequences. In the simulation, totally $1.6\times10^6$ data symbols are used ($N_{\text{frame}}=10,~N_\text{packet}=2000$); the BER curve is then plotted based on the Monte-Carlo method.
From Fig. \ref{fig:7}, we can see that our new CSI estimator attains the near-ML BER performance, while outperforming the classical CSI estimators (e.g., LS and linear MMSE).

\subsection{Channel Capacity and Spectral Efficiency}\label{S6.3}
We then evaluate the channel capacity of the massive MIMO system, assuming $M=128$, $K=40$ and $P=7$.
For simplicity, we directly present the calculated results of the sum rate. One can refer to  \cite{yoo2006capacity} for the detailed equation.
Fig.~\ref{fig:8} depicts the channel capacity achieved by various channel estimators, in comparison with the channel capacity of the perfect CSI. It can be seen from Fig.~\ref{fig:8} that the channel capacity is significantly enhanced by our rank-1 subspace estimator, compared to the other methods. For example, at $\text{SNR}=30$\,dB, the achieved channel capacity of the linear MMSE estimator is 5.8\,bits/sec/Hz, while our new method improves the channel capacity to 7.1\,bits/sec/Hz, which asymptotically closes the \emph{capacity gap} between the estimated CSI and the perfect CSI (where the estimation MSE is zero).

Furthermore, we evaluate the spectral efficiency of various CSI estimators in massive MIMO communications.
In the simulation, we set $M=256$, $K=40$, $P=5$, $B=2K$ and ${\text{rank}}\{\H\}=K$.
Note that, with the finite number of antennas ($M=256$), we may assume that the inter-cell
interference (ICI) from nearby cells can be reasonably ignored,  as the pilot contamination in this
case can be negligible. Thus, we focus on the popular regularized zero-forcing (RZF) precoding scheme \cite{2018Massive}. On this basis, the signal-to-noise-plus-interference ratio (SINR) can be calculated.
Finally, we obtain the instantaneous spectral efficiency of various CSI estimators.
From Fig. \ref{fig:9}, we observe that our proposed method achieves comparable spectral efficiency as the ML estimator, while the others fail to do so.
\begin{figure}[!t]
	\centering
	\includegraphics[width=62mm]{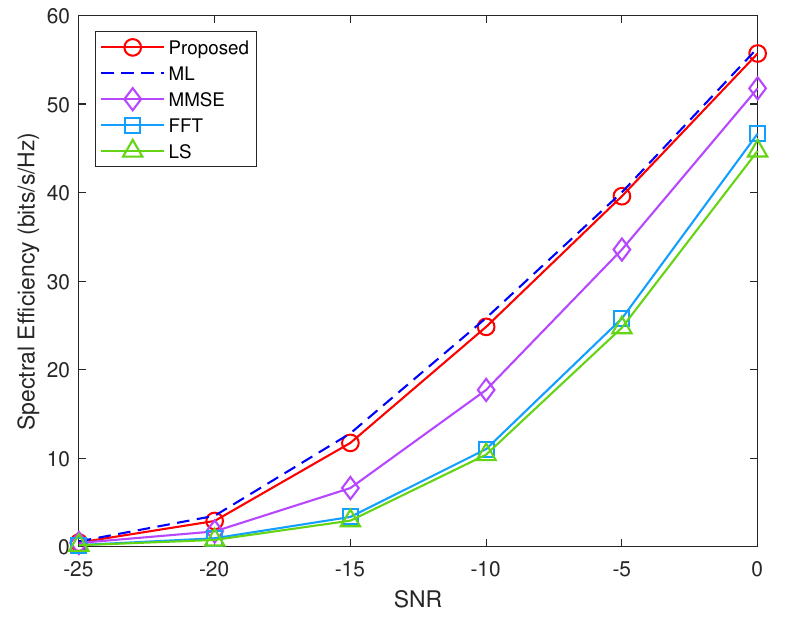}
\caption{Spectral efficiency of various channel estimators as functions of SNR; $M=256$, $K=40$, $L=M/2$, and $B=2K$. The regularized zero-forcing (RZF) precoding scheme is adopted.}
	\label{fig:9}
\end{figure}

\begin{figure}[!t]
\centering
\includegraphics[width=6.2cm]{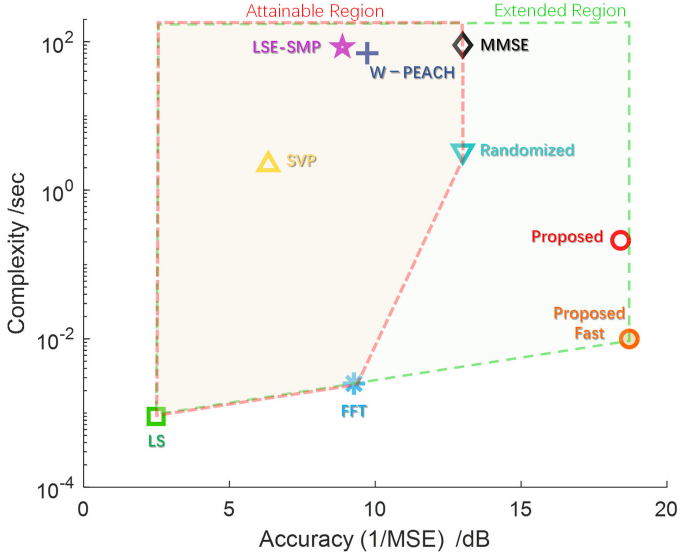}
\caption{Complexity vs accuracy in the CSI acquisition of massive MIMO communications, given $M=256$, $K=40$ and $\text{SNR}=20$\,dB. }
\label{fig:10}
\end{figure}

\begin{figure}[!t]
	\centering
	\includegraphics[width=6.2cm]{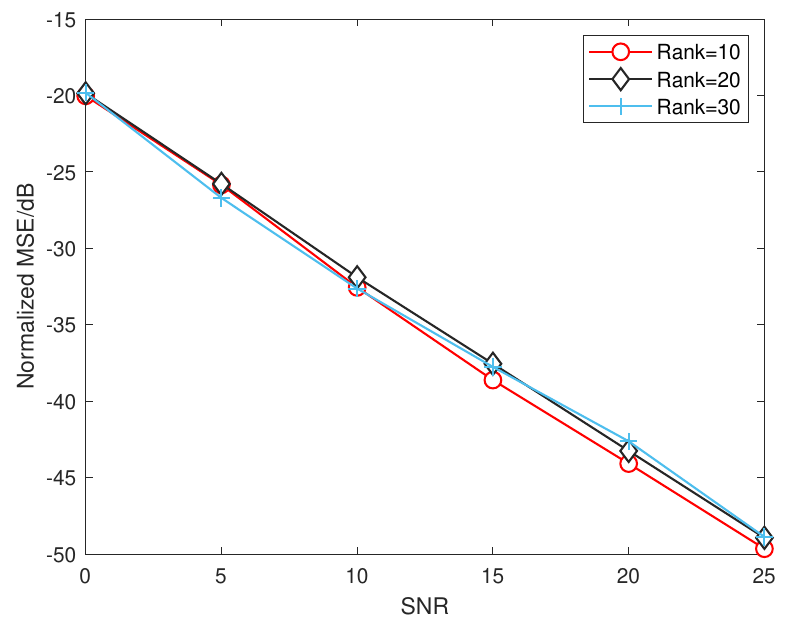}
	\caption{Performance of the proposed CSI estimator under different rank values of channel matrices; $M=256$, $K=40$ and $B=2K$. }
	\label{fig:11}
	\vspace*{-4mm}
\end{figure}

\vspace{-0.4cm}
\subsection{Complexity versus Accuracy Region}\label{S6.4}
Next we jointly evaluate the two aspects of channel estimation in massive MIMO communications, specifically, complexity vs accuracy.
The first performance metric is related to the implementation cost, such as time or energy; while the second metric corresponds to the benefit, i.e., estimation accuracy or channel capacity. Traditionally, these two conflicting aspects have to be balanced in massive MIMO systems, as done by different channel estimators. To achieve the high accuracy in the classical estimation sense, the complexity may become impractical, as in the case of the linear MMSE estimator. To reduce the complexity to a minimum, the estimation accuracy and hence the channel capacity has to be seriously scarified, as in the case of the LS estimator. Within the classical paradigm of trading such two aspects \cite{shen2015joint, shariati2014low,huang2019iterative,fan2017angle}, the attainable complexity-accuracy region in the CSI acquisition is largely limited, as illustrated in the attainable region of Fig.~\ref{fig:10}.

As a key breakthrough, our rank-1 subspace estimator surprisingly extends this complexity-accuracy region considerably, as seen in Fig.~\ref{fig:10}. Unlike the CRLB of the classical linear MMSE method whose MSE is proportional to the noise variance, our theoretical analysis shows that the attainable MSE of our new estimator is further scaled by $\frac{1}{M}$ when $M$ is large (e.g. $M>64$). Hence, the estimation accuracy of our rank-1 subspace estimator is far beyond the limit of the classical linear MMSE estimator, and it substantially enhances the channel capacity and the coverage area. Moreover, its fast version, which does not degrade the estimation accuracy, has the complexity scaled linearly with $M$. This new massive MIMO channel estimator therefore offers the great promise to the emerging 5G-Advanced and 6G communications.


\subsection{Practical Considerations}\label{S6.5}
\subsubsection{Effects of Channel Correlation Factor}
In the above, we have evaluated the performance of our proposed method in both full-rank and low-rank channels.
As seen, our method is independent of the rank value of a channel matrix.
It was shown in \cite{2007Simplified} that the channel correlation factor, or the spatial correlation matrices at the BS, affects the low rank property of massive MIMO channels.
In principle, the higher the channel correlation factor, the lower the channel rank value.
Although a unified relationship between the channel correlation factor and the channel rank value is hard to describe, we may alternatively evaluate the estimators' performance under different rank values, corresponding to various channel correlation factors.
In Fig. \ref{fig:11}, we present the NMSE performance with different rank values. From the averaged NMSE performance obtained with a large number of random channel realizations, we conclude that the proposed method accurately acquires unknown channel matrix.

\subsubsection{Effects of Sector-Level Beamforming}
In practice, the BS may adopt the sector-level beamforming to enhance the system's performance.
Combined with classical CSI estimators (e.g., LS), the sector-level beamforming may further improve the accuracy of channel estimation. Our proposed method can also be applied in this scenario.
Nevertheless, the improvement on the estimation accuracy may be marginal.
This is mainly due to the high-accuracy estimation of AoAs, as well as the subsequent post-reception beamforming in acquiring unknown fading gains based on the ML criterion.
Hence, for our two-stage CSI estimator, the MSE performance may be comparable with the case of using
omnidirectional antenna.

\subsubsection{2D Uniform Rectangular Array (URA) Scenario}
Furthermore, we demonstrate that the proposed channel estimator can be directly applied to the 2D URA based massive MIMO to accomplish a highly accurate channel estimation.
Here, the only difference is that the received signal becomes a two-dimensional matrix.
From Fig. \ref{fig:12}, it is seen that our new estimator attains significant performance gain over the classical LS and FFT estimators.

\subsubsection{Another Potential Solution}
		Lastly, we note that another popular scheme, i.e., approximate message passing (AMP), shares the similar linear inverse problem formulation. In this respect, the orthogonal AMP (OAMP) \cite{2017OAMP}, or recently developed memory AMP (MAMP) \cite{2022MAMP}, may also have the potential to accomplish the Bayesian optimal estimation. It is interesting to consider the application of such AMP methods to realize the high-accuracy CSI estimation in future research.
\begin{figure}[!t]
	\centering
	\includegraphics[width=6cm]{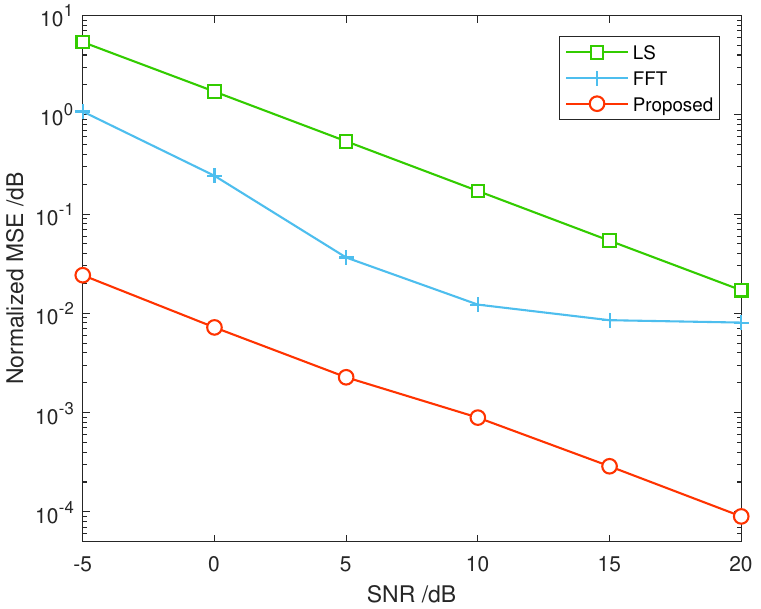}
	\caption{Normalized MSE performance of three channel estimators for 2D URA based MIMO, given $M=32\times 32$, $K=1$, $B=24$ and $P=5$.}
	\label{fig:12}
\end{figure}

\section{Conclusions}\label{S7}
For massive MIMO systems, the existing channel estimators do not trade off estimation accuracy and computational complexity well, and the attainable accuracy-complexity region is fundamentally limited. In this work, we have presented a novel rank-1 subspace channel estimator, which could approximate the ML method while avoiding the computationally exhaustive search of high-dimensional space. Our new estimator first acquires the high-resolution AoA estimate, and then attains the unbiased and highly accurate channel gains by the post-reception beamforming. Given the number of antennas $M$, our theoretical analysis has shown that this new method substantially outperforms the classical linear MMSE estimator, attaining the extra gain that increases with $\log_{10}(M)$. We have also derived a fast implementation of this rank-1 subspace estimator by leveraging the low-rank property of the constructed Hankel matrix. This fast version does not degrade the achievable performance, and yet it reduces the computational complexity to the order of $\textsf{O}(M)$, which is orders of magnitude lower than the linear MMSE estimator. Numerical simulations have validated our theoretical analysis. Our new estimator has thus resolved a long-standing dilemma in massive MIMO signal processing, namely, accuracy vs complexity. It is the first estimator that goes far beyond the estimation accuracy limit set by the linear MMSE estimator, and yet only imposes the complexity in the order of the LS estimator. We believe it has great potential in the 5G-Advanced and 6G communications.


\appendix
\subsection{Proof of Theorem~\ref{T1}}\label{ApA}
\begin{proof}
Substituting $\widehat{\theta}_p$ by $\theta_p$ in (\ref{eq5}) as well as noting the facts of $\bm{a}_M^{\rm H}(\theta_p)\bm{a}_M(\theta_p)=M$ and $\bm{a}_M^{\rm H}(\theta_p)\bm{a}_M(\theta_{p'})\rightarrow 0$ as $M\rightarrow \infty$ for $p\neq p'$, we immediately have
\begin{align}\label{eq:post_beam} 
\widehat{\alpha}_p =& \frac{1}{M} \bm{a}_M^{\rm H}(\theta_p)\bigg( \sum_{p'=0}^{P-1} \alpha_{p'} \bm{a}_M(\theta_{p'}) \bigg) + \frac{1}{M} \bm{a}_M^{\rm H}(\theta_p) \bm{n}, \nonumber \\
=& \alpha_p + n_M ,
\end{align}
where $\bm{n}\triangleq\bm{N}\bm{x}\in\mathbb{C}^{M\times 1}$ with elements having the variance $\sigma_n^2/B\sigma_x^2$, and $n_M\triangleq \bm{a}_M^{\rm H}(\theta_p) \bm{n}/M$ which is a zero-mean AWGN with the variance $\text{var}[n_M]=\sigma_n^2/MB\sigma_x^2$. Therefore, the estimated channel gains using (\ref{eq5}) are unbiased, i.e.,
\begin{align}\label{eqA2}
 \mathsf{E}\big[\widehat{\alpha}_p\big] =& \alpha_p, ~ 0\le p\le P-1 .
\end{align}
With the linear Gaussian model of (\ref{eq:post_beam}), the CRLB of the above channel estimator satisfies \cite{kay1993fundamentals}:
\begin{align} \label{eqA3}
\text{var}\big[\widehat{\alpha}_p\big] \geq & \text{var}[n_M] = \frac{\sigma_n^2}{M B\sigma_x^2}, ~ 0\le p\le P-1 .
\end{align}
This completes the proof of Theorem~\ref{T1}.
\end{proof}

\subsection{Proof of Theorem~\ref{T2}}\label{ApB}
\begin{proof}
For the ULA based massive MIMO system, the main-lobe gain of a post-reception beamformer is modeled as a circularly symmetric Gaussian function \cite{2006Reference,2010Channel,fan2018learning}, i.e.,
\begin{align}\label{eq:Gauss_beamformer} 
 G(\theta) =& M \, \exp\left( -\beta (\theta - \theta_p)^2 \right) ,
\end{align}
where $\beta \triangleq \ln(2)/\Delta\theta_{-3\text{dB}}^2$ with the 3dB beamwidth given by $\Delta\theta_{-3\text{dB}}=2/M$. Upon having the estimated AoAs $\big\{\widehat{\theta}_p\big\}_{p=0}^{P-1}$, the averaged channel gain estimation becomes:
\begin{align} \label{eqB2}
\mathsf{E}\Big[\widehat{\alpha}_p\Big] =& \exp\left( -2\ln(2)M^2\big(\widehat{\theta}_{p}-\theta_{p}\big)^2\right) \, \alpha_p .
\end{align}
When the  number of antennas $M$ is large, in order to attain an unbiased channel gain estimate, the AoA estimation error must satisfy certain condition. More specifically, in order to obtain $\mathsf{E}\Big[\widehat{\alpha}_p\Big]=\alpha_p$, we must have
\begin{align}\label{eq:B3}
M^2\big(\widehat{\theta}_{p}-\theta_{p}\big)^2  \overset{M~\text{is large}}{\longrightarrow} 0 .
\end{align}
It can be readily seen that when $\exists \,\varepsilon>0$ such that the AoA estimation error satisfies $\big|\widehat{\theta}_{p}-\theta_{p}\big| \thicksim \textsf{O}\big(1/M^{1+\varepsilon}\big)$, the unbiased channel gain estimate can be attained by (\ref{eq5}).
\end{proof}

\subsection{Proof of Theorem~\ref{T3}}\label{ApC}
\begin{proof}
From (\ref{eqB2}), the expectation on the estimated channel gain is given by
\begin{align}\label{eqC1}
& \mathsf{E}\Big[\widehat{\alpha}_{p,k}\Big] = \exp\left( -2\ln(2) M^2 \delta_{p,k}^2 \right) \alpha_{p,k} \nonumber \\
& \hspace*{5mm}= \bigg( 1 - 2\ln(2) \textsf{O}\bigg( \tfrac{\log_2(M)}{M^{2\epsilon}} \bigg) \bigg) \alpha_{p,k} \overset{M~\text{is large}}{\longrightarrow} \alpha_{p,k} .
\end{align}
The CRLB of the estimation error is defined by \cite{kay1993fundamentals}
\begin{align}\label{eqC2}
& \mathsf{E}\Big[\big|\widehat{\alpha}_{p,k} - \alpha_{p,k}\big|^2\Big] \geq - \left(\! \tfrac{\partial \ln p\left(\alpha_{p,k} - \frac{1}{M} \bm{a}_M^{\rm H}\big(\widehat{\theta}_{p,k}\big)\bm{y}_k\right)}{\partial \alpha_{p,k}^2} \! \right)^{-1} \!\! .
\end{align}
From (\ref{eq:Gauss_approximation}) and (\ref{eq:def_hk1}), we conclude that the MSE of the channel gain estimate meets:
\begin{align}\label{eqC3}
\mathsf{E}\Big[\big|\widehat{\alpha}_{p,k} - \alpha_{p,k}\big|^2\Big] \geq & \bar{\sigma}_M^2 \overset{M~\text{is large}}{\longrightarrow} \frac{\sigma_n^2}{MB\sigma_x^2}.
\end{align}
This completes the proof.
\end{proof}

\bibliographystyle{IEEEtran}
\bibliography{bib/IEEEabrv,bib/scibib}

\begin{thebibliography}{10}
\providecommand{\url}[1]{#1}
\csname url@samestyle\endcsname
\providecommand{\newblock}{\relax}
\providecommand{\bibinfo}[2]{#2}
\providecommand{\BIBentrySTDinterwordspacing}{\spaceskip=0pt\relax}
\providecommand{\BIBentryALTinterwordstretchfactor}{4}
\providecommand{\BIBentryALTinterwordspacing}{\spaceskip=\fontdimen2\font plus
\BIBentryALTinterwordstretchfactor\fontdimen3\font minus
  \fontdimen4\font\relax}
\providecommand{\BIBforeignlanguage}[2]{{%
\expandafter\ifx\csname l@#1\endcsname\relax
\typeout{** WARNING: IEEEtran.bst: No hyphenation pattern has been}%
\typeout{** loaded for the language `#1'. Using the pattern for}%
\typeout{** the default language instead.}%
\else
\language=\csname l@#1\endcsname
\fi
#2}}
\providecommand{\BIBdecl}{\relax}
\BIBdecl

\bibitem{marzetta2010noncooperative}
T.~L. Marzetta, ``Noncooperative cellular wireless with unlimited numbers of
  base station antennas,'' \emph{{IEEE} Trans. Wireless Commun.}, vol.~9,
  no.~11, pp. 3590--3600, Nov. 2010.

\bibitem{ngo2013energy}
H.~Q. Ngo, E.~G. Larsson, and T.~L. Marzetta, ``Energy and spectral efficiency
  of very large multiuser {MIMO} systems,'' \emph{{IEEE} Trans. Commun.},
  vol.~61, no.~4, pp. 1436--1449, Apr. 2013.

\bibitem{hoydis2013massive}
J.~Hoydis, S.~ten Brink, and M.~Debbah, ``Massive {MIMO} in the {UL/DL} of
  cellular networks: How many antennas do we need?'' \emph{{IEEE} J. Sel. Areas
  Commun.}, vol.~31, no.~2, pp. 160--171, Feb. 2013.

\bibitem{larsson2014massive}
E.~G. Larsson, O.~Edfors, F.~Tufvesson, and T.~L. Marzetta, ``Massive {MIMO}
  for next generation wireless systems,'' \emph{{IEEE} Commun. Mag.}, vol.~52,
  no.~2, pp. 186--195, Feb. 2014.

\bibitem{boccardi2014five}
F.~Boccardi, R.~W. Heath, A.~Lozano, T.~L. Marzetta, and P.~Popovski, ``{Five
  disruptive technology directions for 5G},'' \emph{{IEEE} Commun. Mag.},
  vol.~52, no.~2, pp. 74--80, Feb. 2014.

\bibitem{2015Fifty}
S.~Yang and L.~Hanzo, ``Fifty years of {MIMO} detection: The road to
  large-scale {MIMOs},'' \emph{{IEEE} Commun. Surveys Tuts.}, vol.~17, no.~4,
  pp. 1941--1988, Fourth Quarter 2015.

\bibitem{extreme_massive_MIMO_6G}
N.~Shlezinger, G.~C. Alexandropoulos, M.~F. Imani, Y.~C. Eldar, and D.~R.
  Smith, ``Dynamic metasurface antennas for {6G} extreme massive {MIMO}
  communications,'' \emph{{IEEE} Wireless Commun.}, vol.~28, no.~2, pp.
  106--113, Apr. 2021.

\bibitem{Wang_channel_model_6G}
J.~Wang, C.-X. Wang, J.~Huang, H.~Wang, and X.~Gao, ``A general {3D}
  space-time-frequency non-stationary {THz} channel model for {6G}
  ultra-massive {MIMO} wireless communication systems,'' \emph{{IEEE} J. Sel.
  Areas Commun.}, vol.~39, no.~6, pp. 1576--1589, Jun. 2021.

\bibitem{2016_mMIMO_HSR}
M.~Cheng, S.~Yang, and X.~Fang, ``Adaptive antenna-activation based beamforming
  for large-scale {MIMO} communication systems of high speed railway,''
  \emph{{China} Commun.}, vol.~13, no.~9, pp. 12--23, Sep. 2016.

\bibitem{2016_massive_MIMO_video}
S.~Yang, C.~Zhou, T.~Lv, and L.~Hanzo, ``Large-scale {MIMO} is capable of
  eliminating power-thirsty channel coding for wireless transmission of
  {HEVC/H.265} video,'' \emph{{IEEE} Wireless Commun.}, vol.~23, no.~3, pp.
  57--63, Jun. 2016.

\bibitem{shariati2014robust}
N.~Shariati, J.~Wang, and M.~Bengtsson, ``Robust training sequence design for
  correlated {MIMO} channel estimation,'' \emph{{IEEE} Trans. Signal Process.},
  vol.~62, no.~1, pp. 107--120, Jan. 2014.

\bibitem{Hu_ESPRIT_mMIMO}
A.~Hu, T.~Lv, H.~Gao, Z.~Zhang, and S.~Yang, ``An {ESPRIT}-based approach for
  {2-D} localization of incoherently distributed sources in massive {MIMO}
  systems,'' \emph{{IEEE} J. Sel. Topics Signal Process.}, vol.~8, no.~5, pp.
  996--1011, Oct. 2014.

\bibitem{LV201630}
T.~Lv, F.~Tan, H.~Gao, and S.~Yang, ``A beamspace approach for {2-D}
  localization of incoherently distributed sources in massive {MIMO} systems,''
  \emph{Signal Processing}, vol. 121, pp. 30--45, Apr. 2016.

\bibitem{Zhou_Angle_Estimation}
Y.~Zhou, Z.~Fei, S.~Yang, J.~Kuang, S.~Chen, and L.~Hanzo, ``Joint angle
  estimation and signal reconstruction for coherently distributed sources in
  massive {MIMO} systems based on {2-D} unitary esprit,'' \emph{IEEE Access},
  vol.~5, pp. 9632--9646, 2017.

\bibitem{semiblind_mMIMO_CE}
T.~Lv, S.~Yang, and H.~Gao, ``Semi-blind channel estimation relying on optimum
  pilots designed for multi-cell large-scale {MIMO} systems,'' \emph{IEEE
  Access}, vol.~4, pp. 1190--1204, 2016.

\bibitem{FDD_mMIMO_CE}
H.~Wang, G.~Li, S.~Zheng, S.~Yang, and P.~Pan, ``An approach to reduce the
  overhead of training sequences in {FDD} massive {MIMO} downlink systems,''
  \emph{{IEEE} Wireless Commun. Lett.}, vol.~8, no.~4, pp. 1301--1305, Aug.
  2019.

\bibitem{shariati2014low}
N.~Shariati, E.~Bj{\"o}rnson, M.~Bengtsson, and M.~Debbah, ``Low-complexity
  polynomial channel estimation in large-scale {MIMO} with arbitrary
  statistics,'' \emph{{IEEE} J. Sel. Topics Signal Process.}, vol.~8, no.~5,
  pp. 815--830, Oct. 2014.

\bibitem{Li2020Randomized}
B.~Li, S.~Wang, J.~Zhang, X.~Cao, and C.~Zhao, ``Randomized approximate channel
  estimator in massive-{MIMO} communication,'' \emph{{IEEE} Commun. Lett.},
  vol.~24, no.~10, pp. 2314 -- 2318, Oct. 2020.

\bibitem{Xie2017An}
H.~Xie, F.~Gao, and S.~Jin, ``An overview of low-rank channel estimation for
  massive {MIMO} systems,'' \emph{IEEE Access}, vol.~4, pp. 7313--7321, 2016.

\bibitem{shen2015joint}
W.~Shen, L.~Dai, B.~Shim, S.~Mumtaz, and Z.~Wang, ``Joint {CSIT} acquisition
  based on low-rank matrix completion for {FDD} massive {MIMO} systems,''
  \emph{{IEEE} Commun. Lett.}, vol.~19, no.~12, pp. 2178--2181, Dec. 2015.

\bibitem{eliasi2017low}
P.~A. Eliasi, S.~Rangan, and T.~S. Rappaport, ``Low-rank spatial channel
  estimation for millimeter wave cellular systems,'' \emph{{IEEE} Trans.
  Wireless Commun.}, vol.~16, no.~5, pp. 2748--2759, May 2017.

\bibitem{moshavi1996multistage}
S.~Moshavi, E.~Kanterakis, and D.~L. Schilling, ``Multistage linear receivers
  for {DS-CDMA} systems,'' \emph{International Journal of Wireless Information
  Networks}, vol.~3, pp. 1--17, Jan. 1996.

\bibitem{2010Compressed}
W.~U. Bajwa, J.~Haupt, A.~M. Sayeed, and R.~Nowak, ``Compressed channel
  sensing: A new approach to estimating sparse multipath channels,''
  \emph{Proc. IEEE}, vol.~98, no.~6, pp. 1058--1076, Jun. 2010.

\bibitem{alkhateeb2014channel}
A.~Alkhateeb, O.~E. Ayach, G.~Leus, and R.~W. Heath, ``Channel estimation and
  hybrid precoding for millimeter wave cellular systems,'' \emph{{IEEE} J. Sel.
  Topics Signal Process.}, vol.~8, no.~5, pp. 831--846, Oct. 2014.

\bibitem{alkhateeby2015compressed}
A.~Alkhateeb, G.~Leus, and R.~W. Heath, ``Compressed sensing based multi-user
  millimeter wave systems: How many measurements are needed?'' in \emph{Proc.
  2015 IEEE International Conference on Acoustics, Speech and Signal Processing
  (ICASSP)}, South Brisbane, Australia, Apr. 2015, pp. 2909--2913.

\bibitem{gao2016channel}
Z.~Gao, C.~Hu, L.~Dai, and Z.~Wang, ``Channel estimation for millimeter-wave
  massive {MIMO} with hybrid precoding over frequency-selective fading
  channels,'' \emph{{IEEE} Commun. Lett.}, vol.~20, no.~6, pp. 1259--1262, Jun.
  2016.

\bibitem{rodriguezfernandez2018frequency}
J.~Rodriguezfernandez, N.~Gonzalezprelcic, K.~Venugopal, and R.~W. Heath,
  ``Frequency-domain compressive channel estimation for frequency-selective
  hybrid millimeter wave {MIMO} systems,'' \emph{{IEEE} Trans. Wireless
  Commun.}, vol.~17, no.~5, pp. 2946--2960, May 2018.

\bibitem{huang2018asymptotically}
C.~Huang, L.~Liu, and C.~Yuen, ``Asymptotically optimal estimation algorithm
  for the sparse signal with arbitrary distributions,'' \emph{{IEEE} Trans.
  Veh. Technol.}, vol.~67, no.~10, pp. 10\,070--10\,075, Oct. 2018.

\bibitem{huang2019iterative}
C.~Huang, L.~Liu, C.~Yuen, and S.~Sun, ``Iterative channel estimation using
  {LSE} and sparse message passing for {mm-Wave MIMO} systems,'' \emph{{IEEE}
  Trans. Signal Process.}, vol.~67, no.~1, pp. 245--259, Jan. 2019.

\bibitem{krim1996two}
H.~Krim and M.~Viberg, ``{Two decades of array signal processing research: The
  parametric approach},'' \emph{IEEE Signal Processing Magazine}, vol.~13,
  no.~4, pp. 67--94, Jul. 1996.

\bibitem{schmidt1986multiple}
R.~Schmidt, ``{Multiple emitter location and signal parameter estimation},''
  \emph{{IEEE} Trans. Antennas Propag.}, vol.~34, no.~3, pp. 276--280, Mar.
  1986.

\bibitem{Li2019Fast}
B.~Li, S.~Wang, J.~Zhang, X.~Cao, and C.~Zhao, ``Ultra-fast accurate {AoA}
  estimation via automotive massive-{MIMO} radar,'' \emph{{IEEE} Trans. Veh.
  Technol.}, vol.~71, no.~2, pp. 1172--1186, Feb. 2022.

\bibitem{guo2017millimeter}
Z.~Guo, X.~Wang, and W.~Heng, ``Millimeter-wave channel estimation based on
  {2-D} beamspace {MUSIC} method,'' \emph{{IEEE} Trans. Wireless Commun.},
  vol.~16, no.~8, pp. 5384--5394, Aug. 2017.

\bibitem{fan2017angle}
D.~Fan, F.~Gao, G.~Wang, Z.~Zhong, and A.~Nallanathan, ``Angle domain signal
  processing-aided channel estimation for indoor 60{GHz TDD/FDD} massive {MIMO}
  systems,'' \emph{{IEEE} J. Sel. Areas Commun.}, vol.~35, no.~9, pp.
  1948--1961, Sep. 2017.

\bibitem{liao2016music}
W.~Liao and A.~Fannjiang, ``{MUSIC for single-snapshot spectral estimation:
  Stability and super-resolution},'' \emph{Applied and Computational Harmonic
  Analysis}, vol.~40, no.~1, pp. 33--67, Jan. 2016.

\bibitem{fortunati2014single}
S.~Fortunati, R.~Grasso, F.~Gini, M.~S. Greco, and K.~LePage, ``Single-snapshot
  {DOA} estimation by using compressed sensing,'' \emph{EURASIP Journal on
  Advances in Signal Processing}, vol. 2014, no. 120, pp. 1--17, Jul. 2014.

\bibitem{maisto2022single}
M.~A. Maisto, A.~Dell'Aversano, I.~Russo, A.~Brancaccio, and R.~Solimene, ``A
  single-snapshot {MUSIC} algorithm for {ADAS} radar processing,'' in
  \emph{Proc. 2022 Microwave Mediterranean Symposium (MMS)}, Pizzo Calabro,
  Italy, May 2022, pp. 1--6.

\bibitem{Brady2013Beamspace}
J.~Brady, N.~Behdad, and A.~M. Sayeed, ``Beamspace {MIMO} for millimeter-wave
  communications: System architecture, modeling, analysis, and measurements,''
  \emph{{IEEE} Trans. Antennas Propag.}, vol.~61, no.~7, pp. 3814--3827, Jul.
  2013.

\bibitem{2016Beamspace}
L.~Dai, X.~Gao, S.~Han, I.~Chih-Lin, and X.~Wang, ``Beamspace channel
  estimation for millimeter-wave massive {MIMO} systems with lens antenna
  array,'' in \emph{Proc. 2016 IEEE/CIC International Conference on
  Communications in China (ICCC)}, Chengdu, China, Jul. 2016, pp. 1--6.

\bibitem{Li2013On}
B.~Li, Z.~Zhou, W.~Zou, X.~Sun, and G.~Du, ``On the efficient beam-forming
  training for 60{GHz} wireless personal area networks,'' \emph{{IEEE} Trans.
  Wireless Commun.}, vol.~12, no.~2, pp. 504--515, Feb. 2013.

\bibitem{li2015a}
B.~Li, C.~Zhao, M.~Sun, H.~Zhang, Z.~Zhou, and A.~Nallanathan, ``A {Bayesian}
  approach for nonlinear equalization and signal detection in millimeter-wave
  communications,'' \emph{{IEEE} Trans. Wireless Commun.}, vol.~14, no.~7, pp.
  3794--3809, Jul. 2015.

\bibitem{gustafson2014on}
C.~Gustafson, K.~Haneda, S.~Wyne, and F.~Tufvesson, ``On {mm-Wave} multipath
  clustering and channel modeling,'' \emph{{IEEE} Trans. Antennas Propag.},
  vol.~62, no.~3, pp. 1445--1455, Mar. 2014.

\bibitem{akdeniz2014millimeter}
M.~R. Akdeniz, Y.~Liu, M.~K. Samimi, S.~Sun, S.~Rangan, T.~S. Rappaport, and
  E.~Erkip, ``Millimeter wave channel modeling and cellular capacity
  evaluation,'' \emph{{IEEE} J. Sel. Areas Commun.}, vol.~32, no.~6, pp.
  1164--1179, Jun. 2014.

\bibitem{1996Dynamical}
D.~Lai and G.~Chen, ``Dynamical systems identification from time-series data: A
  hankel matrix approach,'' \emph{Mathematical and Computer Modelling},
  vol.~24, no.~3, pp. 1--10, Aug. 1996.

\bibitem{2015Hankelet}
L.~Lo~Presti, M.~La~Cascia, S.~Sclaroff, and O.~Camps, ``{Hankelet-based
  dynamical systems modeling for 3D action recognition},'' \emph{Image and
  Vision Computing}, vol.~44, pp. 29--43, Dec. 2015.

\bibitem{georgakis2018dynamic}
C.~Georgakis, Y.~Panagakis, and M.~Pantic, ``{Dynamic behavior analysis via
  structured rank minimization},'' \emph{International Journal of Computer
  Vision}, vol. 126, pp. 333--357, Apr. 2018.

\bibitem{hacker2010single}
P.~Hacker and B.~Yang, ``{Single snapshot DOA estimation},'' \emph{Advances in
  Radio Science}, vol.~8, pp. 251--256, Oct. 2010.

\bibitem{roy1989esprit-estimation}
R.~Roy and T.~Kailath, ``{ESPRIT-estimation of signal parameters via rotational
  invariance techniques},'' \emph{{IEEE} Trans. Acoust., Speech, Signal
  Process.}, vol.~37, no.~7, pp. 984--995, Jul. 1989.

\bibitem{Li2021}
B.~Li, S.~Wang, Z.~Feng, J.~Zhang, X.~Cao, and C.~Zhao, ``Fast pseudospectrum
  estimation for automotive massive {MIMO} radar,'' \emph{IEEE Internet of
  Things Journal}, vol.~8, no.~20, pp. 15\,303--15\,316, Oct. 2021.

\bibitem{woodruff2014sketching}
D.~P. Woodruff, ``Sketching as a tool for numerical linear algebra,''
  \emph{Foundations and Trends® in Theoretical Computer Science}, vol.~10, no.
  1–2, pp. 1--157, 2014.

\bibitem{wang2016spsd}
S.~Wang, L.~Luo, and Z.~Zhang, ``{SPSD} matrix approximation vis column
  selection: Theories, algorithms, and extensions,'' \emph{Journal of Machine
  Learning Research}, vol.~17, no.~49, pp. 1697--1745, May 2016.

\bibitem{drineas2005on}
P.~Drineas and M.~W. Mahoney, ``On the {Nystr$\ddot{\text{o}}$m} method for
  approximating a {Gram} matrix for improved kernel-based learning,''
  \emph{Journal of Machine Learning Research}, vol.~6, no.~72, pp. 2153--2175,
  Dec. 2005.

\bibitem{stoica1989music}
P.~Stoica and A.~Nehorai, ``{MUSIC}, maximum likelihood, and {Cramer-Rao}
  bound,'' \emph{{IEEE} Trans. Acoust., Speech, Signal Process.}, vol.~37,
  no.~5, pp. 720--741, May 1989.

\bibitem{kay1993fundamentals}
S.~M. Kay, \emph{Fundamentals of Statistical Signal Processing: Estimation
  Theory}.\hskip 1em plus 0.5em minus 0.4em\relax USA: Prentice-Hall, 1993.

\bibitem{dong2002optimal}
M.~Dong and L.~Tong, ``Optimal design and placement of pilot symbols for
  channel estimation,'' \emph{{IEEE} Trans. Signal Process.}, vol.~50, no.~12,
  pp. 3055--3069, Dec. 2002.

\bibitem{berriche2004cramer}
L.~Berriche, K.~Abed-Meraim, and J.-C. Belfiore, ``{Cramer-Rao} bounds for
  {MIMO} channel estimation,'' in \emph{Proc. 2004 IEEE International
  Conference on Acoustics, Speech, and Signal Processing (ICASSP)}, vol.~IV,
  Montreal, Canada, May 2004, pp. 397--400.

\bibitem{2021Random}
B.~Li, P.~Chen, H.~Liu, W.~Guo, X.~B. Cao, J.~Z. Du, C.~L. Zhao, and J.~Zhang,
  ``Random sketch learning for deep neural networks in edge computing,''
  \emph{Nature Computational Science}, vol.~1, pp. 221--228, Mar. 2021.

\bibitem{2023Machine}
B.~Li, Z.~P. Wei, J.~T. Wu, S.~Yu, T.~Zhang, C.~L. Zhu, D.~Z. Zheng, W.~S. Guo,
  C.~L. Zhao, and J.~Zhang, ``Machine learning-enabled globally guaranteed
  evolutionary computation,'' \emph{Nature Machine Intelligence}, vol.~5,
  no.~4, p. 457–467, 2023.

\bibitem{gittens2016revisiting}
A.~Gittens and M.~W. Mahoney, ``Revisiting the {Nystr$\ddot{\text{o}}$m} method
  for improved large-scale machine learning,'' \emph{Journal of Machine
  Learning Research}, vol.~17, no. 117, pp. 3977--4041, Apr. 2016.

\bibitem{balevi2020massive}
E.~Balevi, A.~Doshi, and J.~G. Andrews, ``{Massive MIMO channel estimation with
  an untrained deep neural network},'' \emph{{IEEE} Trans. Wireless Commun.},
  vol.~19, no.~3, pp. 2079--2090, Mar. 2020.

\bibitem{yoo2006capacity}
T.~Yoo and A.~Goldsmith, ``{Capacity and power allocation for fading MIMO
  channels with channel estimation error},'' \emph{{IEEE} Trans. Inf. Theory},
  vol.~52, no.~5, pp. 2203--2214, May 2006.

\bibitem{2018Massive}
N.~Fatema, G.~Hua, Y.~Xiang, D.~Peng, and I.~Natgunanathan, ``Massive mimo
  linear precoding: A survey,'' \emph{IEEE Systems Journal}, pp. 1--12, 2018.

\bibitem{2007Simplified}
A.~Forenza, D.~J. Love, and R.~W. Heath, ``Simplified spatial correlation
  models for clustered mimo channels with different array configurations,''
  \emph{IEEE Trans. Veh. Technol.}, vol.~56, pp. 1924--1934, 2007.

\bibitem{2017OAMP}
J.~J. Ma and P.~Li, ``Orthogonal {AMP},'' \emph{IEEE Access}, pp. 2020--2033,
  2017.

\bibitem{2022MAMP}
L.~Liu, S.~Huang, and B.~M. Kurkoski, ``Memory {AMP},'' \emph{IEEE Transactions
  on Information Theory}, pp. 8015--8039, 2022.

\bibitem{2006Reference}
I.~Toyoda, \emph{{Reference antenna model with side lobe for TG3c evaluation}},
  IEEE Std. IEEE 802.15-06-0474-00-003c, 2006.

\bibitem{2010Channel}
A.~Maltsev, \emph{{Channel models for 60GHz WLAN systems}}, IEEE Std. IEEE
  802.11-09/0334r8, 2010.

\bibitem{fan2018learning}
C.~Fan, B.~Li, C.~Zhao, W.~Guo, and Y.~Liang, ``Learning-based spectrum sharing
  and spatial reuse in {mm-Wave} ultradense networks,'' \emph{{IEEE} Trans.
  Veh. Technol.}, vol.~67, no.~6, pp. 4954--4968, Jun. 2018.

\end{thebibliography}

\begin{IEEEbiography}[{\includegraphics[width=1in,height=1.25in,clip,keepaspectratio]{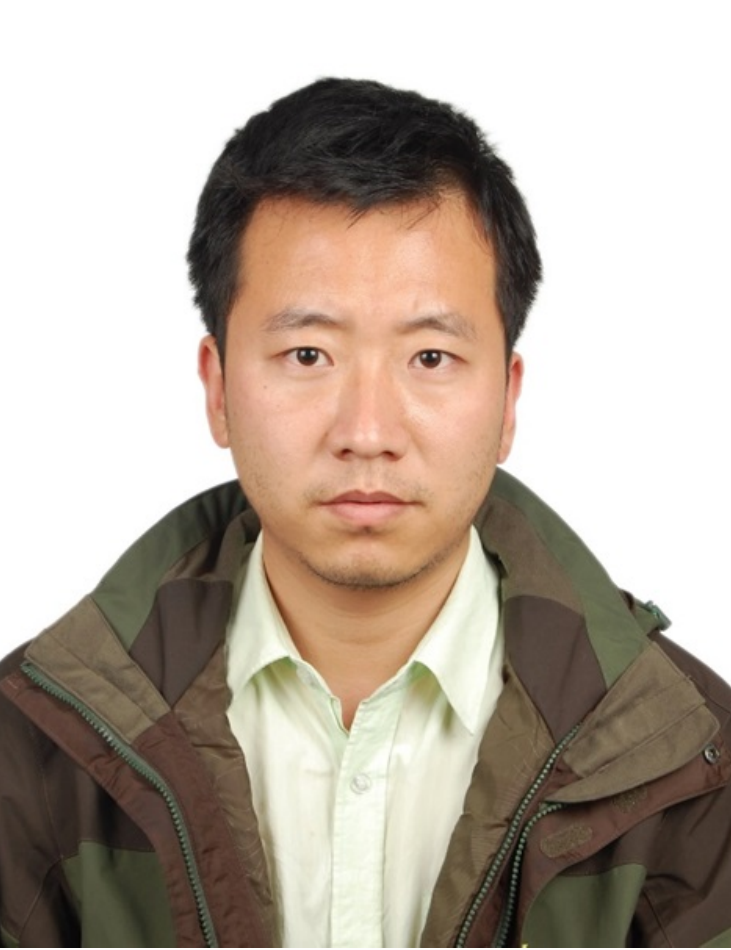}}]{Bin Li}
received the B.Eng. degree in electrical information engineering from Beijing University of Chemical Technology in 2007, and the Ph.D. degree in communication and information engineering from Beijing University of Posts and Telecommunications (BUPT) in 2013. In the same year, he joined BUPT, where he is currently an Associate Professor with the School of Information and Communication Engineering. His current research interests include signal processing for wireless communications and machine learning, such as millimeter-wave communications, UAV communications, MIMO communication/radar systems. He received the 2011 ChinaCom Best Paper Award, the 2015 IEEE WCSP Best Paper Award. He is an Associate Editor of \textit{China Communications}.
\end{IEEEbiography}

\begin{IEEEbiography}[{\includegraphics[width=1in,height=1.25in,clip,keepaspectratio]{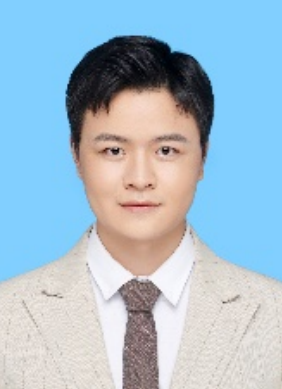}}]{Ziping Wei} 
received the B.Eng. degree in communication engineering from Civil Aviation University of China (CAUC) in 2018. He is currently pursuing the Ph.D. degree with the Key Laboratory of Universal Wireless Communications, Ministry of Education, BUPT, China. His research interests include MIMO signal processing, channel estimation, and radar signal processing.
\end{IEEEbiography}

\begin{IEEEbiography}[{\includegraphics[width=1in,height=1.25in,clip,keepaspectratio]{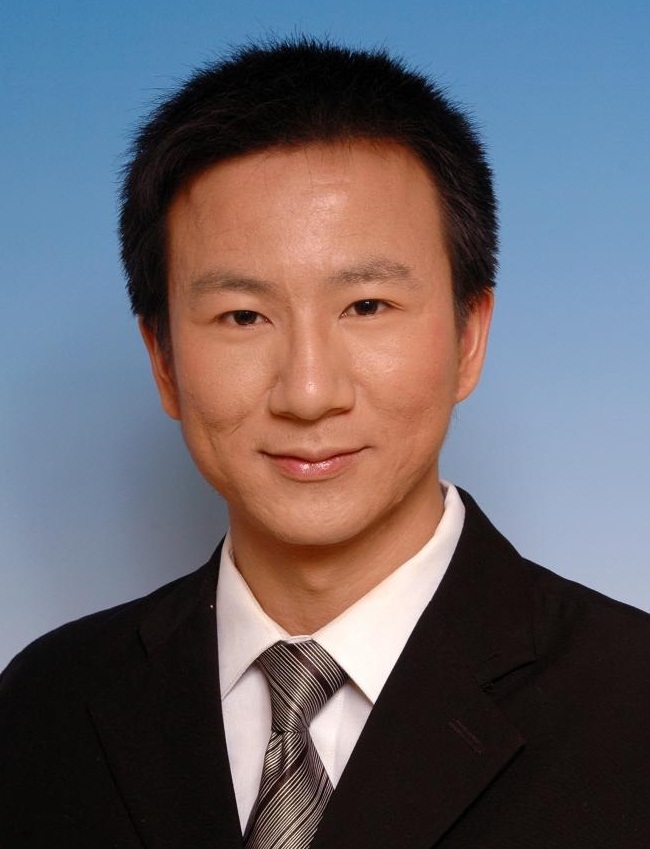}}]{Shaoshi Yang} (Senior Member, IEEE)  
received the B.Eng. degree in information engineering from Beijing University of Posts and Telecommunications (BUPT), China, in 2006, and the Ph.D. degree in electronics and electrical engineering from University of Southampton, UK, in 2013. From 2008 to 2009, he was a Researcher with Intel Labs China. From 2013 to 2016, he was a Research Fellow with the School of Electronics and Computer Science, University of Southampton. From 2016 to 2018, he was a Principal Engineer with Huawei Technologies Co., Ltd., where he made significant contributions to Huawei’s products and solutions on 5G base stations, wideband IoT, and cloud gaming/VR. He is currently a Full Professor with BUPT. His research interests include 5G/5G-A/6G, massive MIMO, mobile ad hoc networks, distributed artificial intelligence, and cloud gaming/VR. He received numerous research awards from University of Southampton, Huawei., IEEE ComSoc, Xiaomi Foundation, China Association of Inventions, and China Industry-University-Research Institute Collaboration Association. He was/is an Editor for \textit{IEEE Systems Journal}, \textit{IEEE Wireless Communications Letters}, and \textit{Signal Processing} (Elsevier). He is a standing committee member of the China Computer Federation (CCF) Technical Committee on Distributed Computing and Systems. He was also a Guest Researcher with the Isaac Newton Institute for Mathematical Sciences, Cambridge University. For more details of his research progress, please refer to https://shaoshiyang.weebly.com/ 
\end{IEEEbiography}

\begin{IEEEbiography}[{\includegraphics[width=1in,height=1.25in,clip,keepaspectratio]{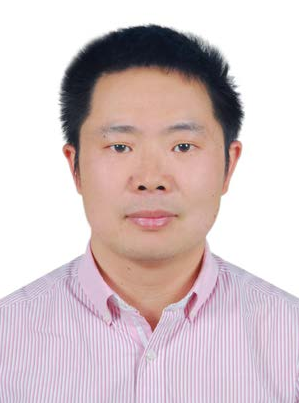}}]{Yang Zhang} received the B.Eng. degree in communication engineering in 2007, and the M.Eng. degree in computer applied technology in 2010, both from China University of Petroleum (CUP), and the Ph.D. degree in communication and information engineering from Beijing University of Posts and Telecommunications (BUPT) in 2014. He is currently with China Satellite Network Innovation Research Institute Co., Ltd. His research interests include MIMO signal processing, satellite communications and networking.  
\end{IEEEbiography}

\begin{IEEEbiography}[{\includegraphics[width=1in,height=1.25in,clip,keepaspectratio]{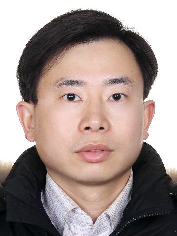}}]{Jun Zhang} (Senior Member, IEEE)  
received the M.S. degree in statistics with the Department of Mathematics from Southeast University, Nanjing, China, in 2009, and the Ph.D. degree in communication and information systems with the National Mobile Communications Research Laboratory, Southeast University, Nanjing, China, in 2013. From 2013 to 2015, he was a Postdoctoral Research Fellow with Singapore University of Technology and Design, Singapore. Since 2015, he is on the faculty of the Jiangsu Key Laboratory of Wireless Communications, School of Communications and Information Engineering, Nanjing University of Posts and Telecommunications, where he is currently a Professor. His research interests include massive MIMO communications, RIS-assisted wireless communications, UAV-assisted wireless communications, physical layer security, and large dimensional random matrix theory. Dr. Zhang was a recipient of the IEEE GLOBECOM Best Paper Award in 2016, the IEEE APCC Best Paper Award in 2017, the IEEE JC\&S Symposium Best Paper Award in 2022, the IEEE/CIC ICCC Best Paper Award in 2023, and the WCSP Best Paper Award in 2023. He was an Associate Editor for \textit{IEEE Communications Letters.}
\end{IEEEbiography}

\begin{IEEEbiography}[{\includegraphics[width=1in,height=1.25in,clip,keepaspectratio]{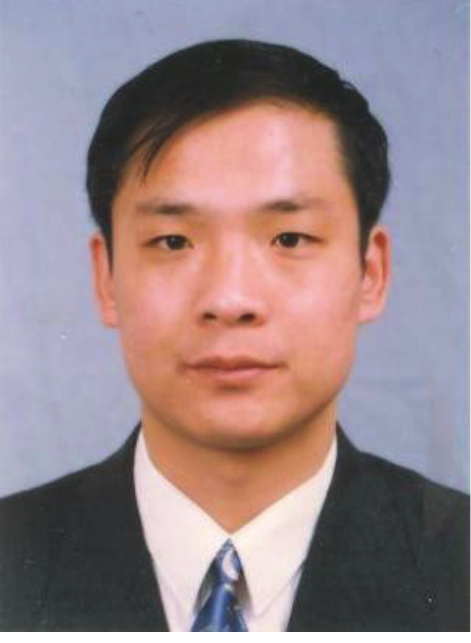}}]{Chenglin Zhao} 
received the B.Eng. degree in radio technology from Tianjin University in 1986, and the M.Eng. degree in circuits and systems from Beijing University of Posts and Telecommunications (BUPT) in 1993, and the Ph.D. degree in communication and information systems from BUPT, in 1997. At present, he serves as a Professor in BUPT. His research is focused on emerging technologies of short-range wireless communications, cognitive radio, 60GHz millimeter-wave communications.
\end{IEEEbiography}

\begin{IEEEbiography}[{\includegraphics[width=1in,height=1.25in,clip,keepaspectratio]{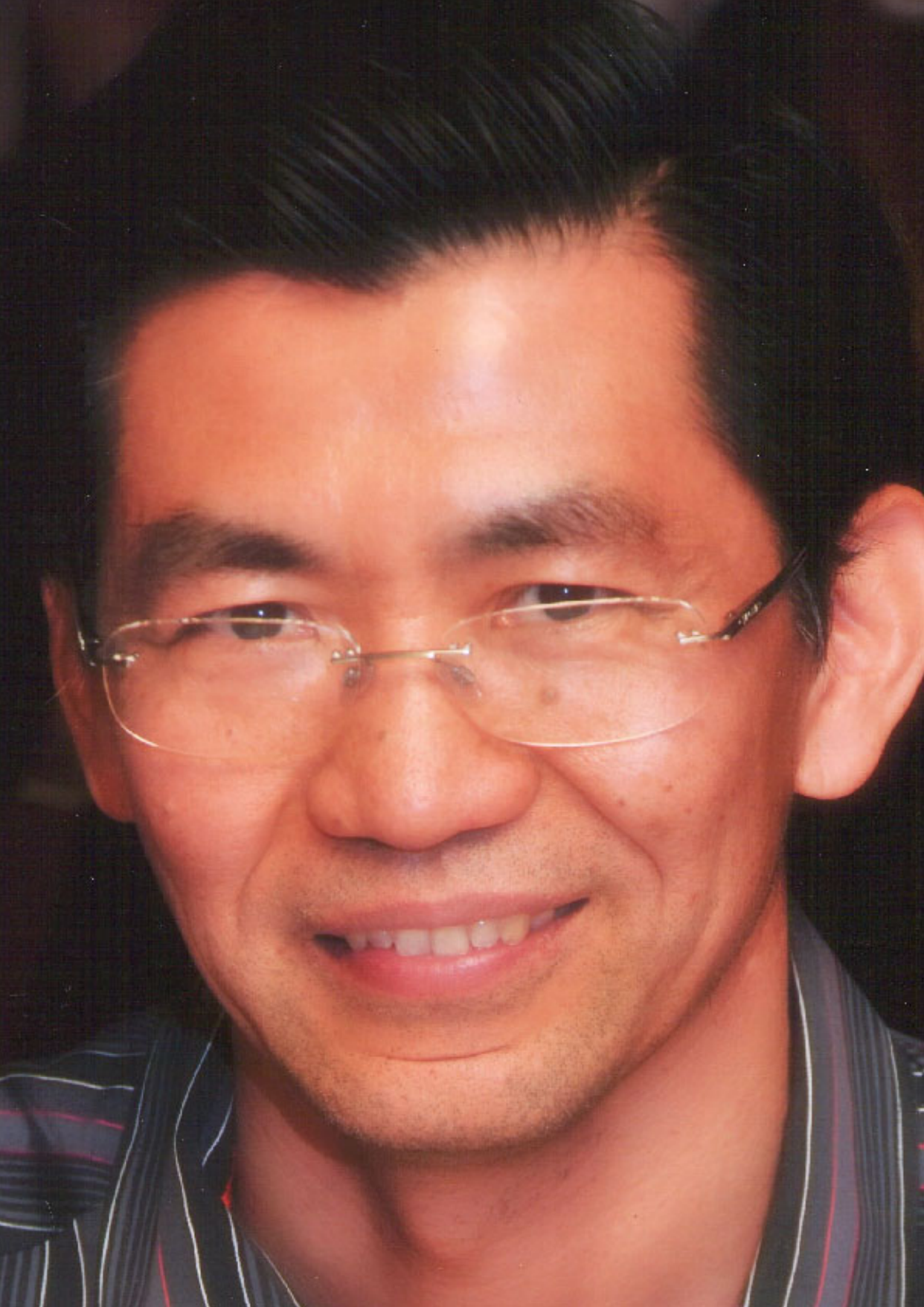}}]{Sheng Chen} (Life Fellow, IEEE) received his B.Eng. degree from East China Petroleum Institute, Dongying, China, in 1982, and his Ph.D. degree from City, University of London, in 1986, both in control engineering. In 2005, he was awarded the higher doctoral degree, Doctor of Sciences (DSc), from University of Southampton, UK. From 1986 to 1999, He held research and academic appointments at University of Sheffield, University of Edinburgh and University of Portsmouth, all in UK. Since 1999, he has been with the School of Electronics and Computer Science, University of Southampton, where he holds the post of Professor in Intelligent Systems and Signal Processing. Dr. Chen's research interests include machine learning, neural networks and wireless communications. He has published over 700 research papers. He has 19,000+ Web of Science citations with h-index 61 and 38,000+ Google Scholar citations with h-index 83. He is also a Fellow of the Royal Academy of Engineering, UK, a Fellow of Asia-Pacific Artificial Intelligence Association (AAIA), and a Fellow of IET. He is one of the original ISI highly cited researcher in engineering (March 2004).
\end{IEEEbiography}

\end{document}